\documentclass[journal]{IEEEtran}

\IEEEoverridecommandlockouts

\usepackage{bm}
\usepackage{epsf}
\usepackage{subfigure}
\usepackage{cite}
\usepackage{graphics} 
\usepackage{epsfig} 
\usepackage{amsthm}
\usepackage{mathrsfs}
\usepackage{mathptmx} 
\usepackage{times} 
\usepackage{amsmath} 
\usepackage{amssymb}  
\usepackage{psfrag}
\usepackage{enumerate}
\usepackage{url}
\usepackage{stfloats}
\usepackage{array}
\usepackage{latexsym}
\usepackage{amssymb} 
\usepackage[usenames]{color}
\usepackage[ruled,vlined]{algorithm2e}
\usepackage{wrapfig}

\usepackage{booktabs}
\usepackage{threeparttable}

\newtheorem{definition}{Definition}
\newtheorem{theorem}{Theorem}

\newtheorem{proposition}{Proposition}
\newtheorem{lemma}{Lemma}

\newtheorem{remark}{Remark}
\newtheorem{example}{Example}
\newtheorem{claim}{Claim}


\newcommand{\lfteqn}{\begin{eqnarray} \begin{array}{lllllll}}
\newcommand{\ndeqn}{\end{array} \nonumber \end{eqnarray}}
\newcommand{\Lfteqn}{\begin{eqnarray} \begin{array}{lllllll}}
\newcommand{\Ndeqn}{\end{array}  \end{eqnarray}}

\hyphenation{op-tical net-works semi-conduc-tor}
\begin{document}

\title{\LARGE \bf A New Approach for Verification of Delay Coobservability of Discrete-Event Systems
}

\author{Yunfeng~Hou, ~Qingdu~Li, ~Yunfeng~Ji,~\IEEEmembership{Member,~IEEE},  ~Gang~Wang ~\IEEEmembership{Member,~IEEE}, and  ~Ching-Yen~Weng

\thanks{Yunfeng Hou, Qingdu Li, Yunfeng Ji, and  Gang Wang are with the Institute of Machine Intelligence, University of Shanghai for
Science and Technology, Shanghai 200093, China. 
Ching-Yen Weng is with  the Robotics Research Centre, Nanyang Technological University, Singapore.
(Email: yunfenghou@usst.edu.cn; liqd@usst.edu.cn; ji$\_$yunfeng@usst.edu.cn; gwang@usst.edu.cn; weng0025@e.ntu.edu.sg).
}}



\maketitle

\begin{abstract}
In decentralized networked supervisory control of discrete-event systems (DESs), the local supervisors observe event occurrences subject to observation delays to make correct control
decisions. Delay coobservability describes whether these local
supervisors can make sufficient observations. 
In this paper, we provide an efficient way to verify delay coobservability.
For each controllable event, we partition the specification language into a finite number of sets such that strings in
different sets have different lengths.  
For each of the sets, we construct a verifier to check if delay coobservability holds for the controllable event.
The computational complexity of the proposed approach is polynomial with respect to the number of states, the number of events, and the upper bounds on observation delays and only exponential with respect to the number of local supervisors.
It has lower complexity order than the existing approaches.
In addition, we investigate the relationship between the decentralized supervisory control of networked DESs and the decentralized fault diagnosis of  networked DESs and show that delay $K$-codiagnosability is transformable to delay coobservability. 
Thus, techniques for the verification of delay coobservability can be leveraged to verify delay $K$-codiagnosability.

\end{abstract}

\begin{IEEEkeywords}
DESs,\ delay coobservability,\  verification,\ delay $K$-codiagnosability.
\end{IEEEkeywords}

\section{Introduction}

\IEEEPARstart{I}{n} \textcolor{blue}{the context of DESs, observability has been a vital issue for supervisory control.
A system is said to be observable if the extensions of any two confusable strings (having the same natural projection) with the same controllable event should be
either both within the legal language or both out of it \cite{lin88is}.
Observability has been extensively investigated over the past decades. 
For example,  it has been used to solve the decentralized supervisory control problem  \cite{lin88isb, lin90tac,rudie92tac,rudie95tac,yoo02deds,yoo04tac,yin16tac3},  where the plant is controlled by a set of local supervisors.
It also has been used to solve a robust control problem, where the system is not entirely
known \cite{lin93tac}. 
Relevant properties stronger than observability, including normality \cite{lin88is}, weak normality \cite{takai02auto}, strong observability\cite{takai05tac}, and
relative observability \cite{cai15tac}, were also proposed.}

\textcolor{blue}{Advances in network technology enable us to connet the supervisor(s) with the plant using networks.
Such a networked information structure not only provides efficient ways for controlling DES, but also brings new challenges on achieving the control objective, e.g., how to overcome the delays and losses occurring in the communication between the plant and the supervisor(s). \cite{lin2014control,shu14ac,shu17tac2,shu17tac1,shu15tac,hou20tac,liu21tac, wang16tase,rashidinejad18,alves20tac,alves17cdc, zhao17tcns}.
Recently, delay coobservability was employed to solve the decentralized (nonblocking) supervisory control problem under communication delays \cite{xu20deds}. 
In the delay coobservability,  if an event  needs to be disabled after the occurrence of a string, then there exists at least one local supervisor which can do this with certainty even if there exist communication delays.
Thus, the verification of delay coobservability is a crucial step in the decentralized (nonblocking) supervisory control under communication delays.
With this as motivation, we study the verification of delay coobservability in this paper.}

The problem of the verification of delay coobservability was initially studied in \cite{xu20tcns}. 
Given that the observation delays for each local supervisor are fixed, the authors in \cite{xu20tcns} constructed a verifier to search all the confusable string collections.
Based on the constructed verifiers, one can check if there exists a confusable string collection leading to a violation of delay coobservability.
Since there are $\prod_{i=1}^n(N_{o,i}+1)$ combinations of fixed observation delays, the number of verifiers to be constructed  in \cite{xu20tcns}  is $\prod_{i=1}^n(N_{o,i}+1)$ where $n$ and $N_{o,i}$ denote the number of local supervisors and the upper bounds on observation delays for local supervisor $i$,  respectively.

\textcolor{blue}{In this paper, we propose a new approach for the verification of delay coobservability.
Instead of searching all the confusable string collections, we search only those confusable string collections that  may cause a violation of delay coobservability.
}
The verification is performed for one controllable event at a time.
For a given controllable event $\sigma$, let $N \le \max\{N_{o,1},\ldots,N_{o,n}\}$ be the maximum observation delays for those supervisors for which the occurrence of $\sigma$ is controllable.
We first partition the specification language into  $N+1$ mutually disjoint sets  such that (i) the $k$th set, $k\in \{1,\ldots,N\}$, contains all the strings in the specification language with the length of $k-1$; and (ii) the $N+1$st set contains all the strings in the specification language with the length no smaller than $N$.
Then, for each of the partitioned sets, we construct a verifier to check whether all the strings in the set are safe with respect to (w.r.t.) the controllable event $\sigma$.
Therefore, the number of verifiers proposed in this paper is upper-bounded by  $(\max\{N_{o,1},\ldots,N_{o,n}\}+1)\times |\Sigma|$,  where $\Sigma$ is the event set of the plant. 
Computational complexity analysis shows that our approach has a worst-case complexity of $(n+1) \times |Q_H|^{n+1}\times |\Sigma|^2  \times \sum_{k=0}^{\max\{N_{o,1},\ldots,N_{o,n}\}} (k+1)$, where $Q_H$ is the state space of the plant.
This is in contrast to the complexity of $(n+1)\times |Q_H|^{n+1}\times|\Sigma|^{\max\{N_{o,1},\ldots,N_{o,n}\}+1}\times\prod_{i=1}^n(N_{o,i}+1)$ for the previous approach.
Thus, the worst-case complexity for our approach is polynomial in $|Q_H|$, $|\Sigma|$, and $\max\{N_{o,1},\ldots,N_{o,n}\}$ and exponential in $n$, whereas for the previous approach it is polynomial in $|Q_H|$ and $|\Sigma|$ and exponential in $n$ and $\max\{N_{o,1},\ldots,N_{o,n}\}$.

We also consider the decentralized fault diagnosis problem under the framework of networked DESs proposed in \cite{lin2014control,shu14ac, xu20deds}, where multiple local diagnosing agents work as a group  with possible observation delays to diagnose the system.
The notions of delay $K$-codiagnosability is introduced to capture whether the local agents can detect an unobservable fault event occurrence within $K$ steps, when observation delays exist.
We investigate the relationship between  delay coobservability and  delay $K$-codiagnosability and show that delay $K$-codiagnosability can be transformed into delay coobservability.
Thus, algorithms for verifying delay coobservability can be extended to verify delay $K$-codiagnosability.  
The difference between our work and the existing works will be discussed in Section VI.

The rest of this paper is organized as follows. 
Section II presents some preliminary concepts and reviews the definition of delay coobservability. 
A new approach for verifying delay coobservability is presented in Section III.
Section IV analyzes the computational complexity of the proposed approach and compares it with that of the existing approach.
Section V shows the application of the results derived in this paper.
Section VI studies the decentralized fault diagnosis problem of networked DESs and shows how to apply the approach for the verification of delay coobservability to verify delay $K$-codiagnosability.
Section VII concludes this paper.

\section{Preliminaries}
\label{Prelim}

\subsection{Preliminaries}

A deterministic finite automaton $G=(Q,\Sigma,\delta,\Gamma,q_0,Q_m)$ is used to describe a DES, where $Q$ is the finite set of states;
$\Sigma$ is the finite set of  events;
$q_0$ is the initial state;
$\delta:Q\times\Sigma\rightarrow Q$ is the transition function.
For a state $q\in Q$, active event set $\Gamma(q)\subseteq \Sigma$ is the set of events $\sigma$ such that $\delta(q,\sigma)$ is defined.
$Q_m$ is the set of marked states.
It should be noted that the initial state $q_0$ is not required to be a single state and can be a set of  states.
$\Sigma^*$ is the set of all strings that are composed of events in $\Sigma$.
$\delta$ can be iteratively extended to $Q\times\Sigma^*$ in the usual way \cite{lafortune07book}.
The language generated by $G$ is denoted by $\mathcal{L}(G)$, and the language marked by $G$ is $\mathcal{L}_m(G)$.
$\varepsilon$ is the empty string.

$\mathbb{N}$ is the set of natural numbers.
Given a natural number $M \in \mathbb{N}$, $\Sigma^{\le M}$ is the set of all strings in $\Sigma^*$ with a length no larger than $M$.
Let $[0,M]$ be the set of natural numbers no larger than $M$.
The prefix-closure of a string $s\in \mathcal{L}(G)$ is defined by $\overline{\{s\}}=\{u\in \Sigma^*:(\exists v\in \Sigma^*)uv=s\}$.
Given $L\in \Sigma^*$, $\overline{L}=\{u\in \Sigma^*:(\exists s\in L)u \in \overline{\{s\}}\}$.
$L$ is said to be prefix-closed if $L=\overline{L}$.
{$L$ is $\mathcal{L}_m(G)$-closed if $L=\overline{L} \cap \mathcal{L}_m(G)$.}
$G$ is said to be nonblocking if $\mathcal{L}(G)=\overline{\mathcal{L}_m(G)}$.

Given a string $s$, we denote its length by $|s|$.
Let $s_{-i}$ be the string in $\overline{\{s\}}$ with $|s_{-i}| = \max\{0, |s|-i\}$.
Let $s/s_{-i}$ be the suffix of $s$ such that $s=s_{-i}(s/s_{-i})$.
Given a string $s=\sigma_1\sigma_2\cdots\sigma_k\in \Sigma^*$, we write $s^i=\sigma_1\sigma_2\cdots\sigma_i$ for $i=1,2,\ldots,k$, and $s^0=\varepsilon$.
Given $G_1$ and $G_2$, the parallel composition of $G_1$ and $G_2$ is denoted by $G_1||G_2$ \cite{lafortune07book}.
We say $G_1$ is a sub-automaton  of $G_2$, denoted by $G_1 \sqsubseteq G_2$, if $G_1$ can be
obtained from $G_2$ by deleting some states in $G_2$ and all the transitions connected to these states.
The cardinality of a set $Z$ is denoted by $|Z|$.

In the context of decentralized networked supervisory control, $n$ local supervisors are involved.
We denote by $I=\{1,2,\ldots,n\}$ the index set of the local supervisors.
For each local supervisor $i\in I$, we denote the set of observable events as $\Sigma_{o,i} \subseteq \Sigma_o$ and the set of controllable events as $\Sigma_{c,i} \subseteq \Sigma_c$, where $\Sigma_o=\cup_{i=1}^n\Sigma_{o,i}$ and $\Sigma_c=\cup_{i=1}^n\Sigma_{c,i}$.
$\Sigma_{uo}=\Sigma\setminus\Sigma_o$ is the set of events that are unobservable to all local supervisors, and $\Sigma_{uc}=\Sigma\setminus \Sigma_c$ is the set of events that are uncontrollable to all local supervisors.
For a controllable event $\sigma \in \Sigma_c$, we denote by $I^c(\sigma)=\{i\in I:\sigma \in \Sigma_{c,i}\}$ the set of local supervisors that the event occurrence of $\sigma$ is controllable to them.
For local supervisor $i\in I$, the natural projection $P_i: \mathscr{L}(G) \rightarrow \Sigma_{o,i}^{\ast}$ is defined as $P_i(\varepsilon)=\varepsilon$ and, for all $s, s\sigma \in\mathscr{L}(G)$, $P_i(s\sigma)=P_i(s)\sigma$ if $\sigma \in \Sigma_{o,i}$, and  $P_i(s\sigma)=P_i(s)$, otherwise.
The inverse mapping $P_i^{-1}:\Sigma_{o,i}^*\rightarrow 2^{\Sigma^*}$ of $P_i$ is defined as follows: for all $t\in \Sigma_{o,i}^*$, $P_i^{-1}(t)=\{s\in\Sigma^*:P_i(s)=t\}$.
$P_i$ and $P_i^{-1}$ are extended from a string to a language in the usual way.

\subsection{Decentralized nonblocking networked supervisory control }
\label{Nobs}

The goal of decentralized nonblocking networked supervisory control  in \cite{xu20deds,xu20tcns} is to find a set of local supervisors that work as a group to achieve the specification language deterministically under control delays and observation delays.
Each local supervisor is connected to the plant via an independent observation channel.
The control commands issued by each local supervisor are delivered to the fusion site via an independent control channel.
The  adopted decentralized control architecture is the conjunctive and permissive architecture.
The protocol under this architecture can be described as follows: each local supervisor sends its event enablement commands to the fusion site, and control actions are then obtained by taking the intersection of these enabled events.

Due to the network characteristics, delays exist for both control and observation.
The assumptions made in this paper are as follows. 
  \begin{enumerate}
    \item{For each local supervisor $i\in I$, delays that occur in the observation channel $i$ (control channel $i$) are random but upper bounded by $N_{o,i}$ ($N_{c,i}$) event occurrences;}
    \item{For each observation channel, delays do not change the order of observations, i.e., first-in-first-out (FIFO) is satisfied;}
    \item{For each local supervisor $i\in I$, the control command being in effect  is the one that has most-recently sent to the fusion site;}
    \item{The initial control commands sent by all the local supervisors can be executed without any delays.}
    \end{enumerate}

By assumption 1), an event occurrence that is observable to the local supervisor $i$ can be delivered to local supervisor $i$ before no more than $N_{o,i}$ additional event occurrences, and a control command issued by the local supervisor $i$ can be delivered to the fusion site  before no more than $N_{c,i}$ additional event occurrences.
Due to control delays, for an occurred string $s\in \mathcal{L}(G)$ and a supervisor $i\in I$, the control commands being in effect at the fusion site can be any one of the control commands issued after the occurrence of $s_{-m_i}$, $m_i\in[0,N_{c,i}]$.
Due to observation delays, for an occurred string $s\in \mathcal{L}(G)$, what the supervisor $i$ may see is nondeterministic and denoted by $\Theta_i^{N_{o,i}}(s)=\{P_i(s_{-m_i}):m_i\in[0,N_{o,i}]\}$.
The inverse mapping of ${\Theta_i^{N_{o,i}}}$ is defined as follows: for all $t_i\in\Sigma_{o,i}^*$, $({\Theta_i^{N_{o,i}}})^{-1}(t_i)=\{s_i\in\mathcal{L}(G):t_i\in \Theta_i^{N_{o,i}}(s_i)\}$.
${\Theta_i^{N_{o,i}}}$ and $({\Theta_i^{N_{o,i}}})^{-1}$ are extended from a string to a language in the usual way.
The following conclusion is proven in \cite{xu20tcns}.
\begin{lemma}\label{Lem1}
For any $t \in \Theta_i^{N_{o,i}}(\mathcal{L}(G))$, the inverse mapping $({\Theta_i^{N_{o,i}}})^{-1}(t)$ is equal to the concatenation of the inverse mapping $P_i^{-1}(t)$ and the set of strings with lengths no larger than $N_{o,i}$. More precisely, $({\Theta_i^{N_{o,i}}})^{-1}(t)=P_i^{-1}(t)\Sigma^{\le N_{o,i}}.$
\end{lemma}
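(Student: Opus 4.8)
The plan is to establish the set equality by double inclusion, after unfolding the definition of $\Theta_i^{N_{o,i}}$ and its inverse. The essential observation is that, for a fixed delay $m\in[0,N_{o,i}]$, the condition $t=P_i(s_{-m})$ is exactly the statement that $s$ admits a factorization $s=s_{-m}(s/s_{-m})$ whose prefix $s_{-m}$ lies in $P_i^{-1}(t)$ and whose suffix $s/s_{-m}$ is short, of length at most $m$. So the two sides of the claimed identity are two bookkeeping schemes for the same set: the left-hand side indexes membership by the delay parameter $m$, while the right-hand side indexes it by a bounded suffix. The proof is then just a translation between these two descriptions.

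For the inclusion $(\Theta_i^{N_{o,i}})^{-1}(t)\subseteq P_i^{-1}(t)\Sigma^{\le N_{o,i}}$, I would take $s\in(\Theta_i^{N_{o,i}})^{-1}(t)$, so that $t\in\Theta_i^{N_{o,i}}(s)$ and hence $t=P_i(s_{-m})$ for some $m\in[0,N_{o,i}]$. Writing $s=s_{-m}(s/s_{-m})$, the prefix $s_{-m}$ belongs to $P_i^{-1}(t)$ by construction, while the suffix satisfies $|s/s_{-m}|=|s|-\max\{0,|s|-m\}=\min\{|s|,m\}\le m\le N_{o,i}$, so $s/s_{-m}\in\Sigma^{\le N_{o,i}}$. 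This exhibits $s$ as an element of $P_i^{-1}(t)\Sigma^{\le N_{o,i}}$. For the reverse inclusion, I would take $s\in P_i^{-1}(t)\Sigma^{\le N_{o,i}}$ and fix a factorization $s=uv$ with $u\in P_i^{-1}(t)$ and $v\in\Sigma^{\le N_{o,i}}$. Setting $m=|v|\le N_{o,i}$, I would verify that $s_{-m}=u$: indeed $s_{-m}$ is the prefix of $s$ of length $\max\{0,|s|-m\}=|u|$ (using $|s|=|u|+|v|\ge|v|$), which is precisely $u$. Hence $P_i(s_{-m})=P_i(u)=t$, giving $t\in\Theta_i^{N_{o,i}}(s)$ and therefore $s\in(\Theta_i^{N_{o,i}})^{-1}(t)$.

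The step that needs the most care is the bookkeeping at the boundary, together with the domain convention for $\mathcal{L}(G)$. Concretely, one must use the truncation rule $|s_{-m}|=\max\{0,|s|-m\}$ to cover the regime $m\ge|s|$, where $s_{-m}=\varepsilon$ and the whole string $s$ (of length $<m\le N_{o,i}$) plays the role of the short suffix; a computation that naively sets $|s_{-m}|=|s|-m$ would fail in this case. In addition, since $(\Theta_i^{N_{o,i}})^{-1}(t)$ is defined as a subset of $\mathcal{L}(G)$ whereas $P_i^{-1}(t)$ ranges over all of $\Sigma^*$, the reverse inclusion should be read with both sides understood relative to $\mathcal{L}(G)$, so that the factorization $s=uv$ one works with is itself a generated string. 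This is the only genuinely nonmechanical point, and it is where the hypothesis $t\in\Theta_i^{N_{o,i}}(\mathcal{L}(G))$ enters, guaranteeing that $t$ is the observation of some string of $\mathcal{L}(G)$ and that the prefixes appearing in the argument remain in the prefix-closed language $\mathcal{L}(G)$.
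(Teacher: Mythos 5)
Your proof is correct. Note that the paper itself gives no proof of this lemma---it is stated as a result proven in the cited reference \cite{xu20tcns}---so there is no in-paper argument to compare against; your double-inclusion proof, obtained by unfolding the definitions of $\Theta_i^{N_{o,i}}$ and its inverse, is the natural and standard argument for this statement. Both directions check out, including the boundary bookkeeping via $|s_{-m}|=\max\{0,|s|-m\}$ (so that $|s/s_{-m}|=\min\{|s|,m\}\le N_{o,i}$ even when $m\ge|s|$) and the observation that the stated equality must be read relative to $\mathcal{L}(G)$, since $(\Theta_i^{N_{o,i}})^{-1}(t)\subseteq\mathcal{L}(G)$ by definition while $P_i^{-1}(t)\Sigma^{\le N_{o,i}}$ is a priori a subset of $\Sigma^*$; this is exactly the convention under which the lemma is applied in the proof of Theorem~\ref{Theo1}, where both sides are intersected with $\mathcal{L}(H)$.
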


The desired system is represented in this paper by a sub-automaton of $G$ denoted $H=(Q_H,\Sigma,\delta_H,\Gamma_H,q_0,Q_{m,H})\sqsubseteq G$. 
We call $\mathcal{L}(H)$ as the specification language.
$\mathcal{L}(H)$ is controllable \cite{ramadge1987supervisory} w.r.t. $\Sigma_{uc}$ and $\mathscr{L}(G)$ if $\mathcal{L}(H)\Sigma_{uc} \cap \mathscr{L}(G) \subseteq \mathcal{L}(H)$.
The notion of delay coobservability is introduced to describe whether these local supervisors can make sufficient observations so that the correct control decisions can be made even if observation delays and control delays exist.
Formally, delay coobservability is defined in \cite{xu20deds} as follows.
\begin{definition}\label{Def1}
$\mathcal{L}(H)$ is delay coobservable w.r.t. $N_{o,1},\ldots, N_{o,n}$, and $\mathcal{L}(G)$, if for any string $s \in \mathcal{L}(H)$ and any controllable event $\sigma \in \Sigma_c$,
\begin{align}\label{Eq1}
&s\sigma\in \mathcal{L}(G) \wedge s\sigma \notin \mathcal{L}(H) \Rightarrow \notag \\
&(\exists i \in I^c(\sigma))({\Theta_i^{N_{o,i}}})^{-1}(\Theta_i^{N_{o,i}}(s))\sigma \cap \mathcal{L}(H)=\emptyset.
\end{align}
\end{definition}
By (\ref{Eq1}), if the occurrence of $\sigma$ needs to be disabled after $s$, then there exists at least one local supervisor who can disable the event occurrence of $\sigma$ and can distinguish $s$ from all the strings after which the event occurrence of $\sigma$ needs to be enabled, subject to observation delays.
It is shown in \cite{xu20tcns} that the decentralized nonblocking networked supervisory control problem is solvable if and only if (i)  the specification language $\mathcal{L}(H)$ is controllable w.r.t. $\Sigma_{uc}$ and $\mathscr{L}(G)$, (ii) $\mathcal{L}_m(G)$-closed, and (iii) delay coobservable w.r.t. $N_{o,1}+N_{c,1},\ldots, N_{o,n}+N_{c,n}$, and $\mathcal{L}(G)$.

Recall from  \cite{xu20tcns} that an augmented automaton  $H_{N_{o,i}}^{aug}=(Q_H \cup\{q_{dis}\},\Sigma,\delta_{H,N_{o,i}}^{aug},\Gamma_{H,{N_{o,i}}}^{aug}, q_0, Q_{H,m})$ is constructed as follows: for all $q \in Q_H$ and all $\sigma \in \Sigma$,
\lfteqn
\delta_{H,{N_{o,i}}}^{aug}(q,\sigma)=
\begin{cases}
	\delta_H(q,\sigma) & \mathrm{if}\ \sigma \in \Gamma_H(q)\\
	q_{dis} & \mathrm{if}\ \sigma \notin \Gamma_H(q)\wedge (\exists s'\in \Sigma^{\le {N_{o,i}}})\\&\sigma \in \Gamma_H(\delta_H(q,s'))\\
	\mathrm{undefined} & \mathrm{otherwise}.
\end{cases}
\ndeqn

In $H_{N_{o,i}}^{aug}$,  events that are active when the system is in state $q \in Q_H$ are those that are defined at states in $H$ can be reached from $q$ within $N_{o,i}$ steps. This gives us Lemma \ref{Lem2}.

\begin{lemma}\label{Lem2}
For any string $s\in \mathcal{L}(H)$ and any event $\sigma \in \Sigma$, the following statement is true.
\begin{align}
s\sigma \in \mathcal{L}(H_{N_{o,i}}^{aug})\Leftrightarrow (\exists s'\in \Sigma^{\le {N_{o,i}}})ss'\sigma \in \mathcal{L}(H).
\end{align}
\end{lemma}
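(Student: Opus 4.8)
The plan is to unwind the definition of the augmented transition function $\delta_{H,N_{o,i}}^{aug}$ and reduce the biconditional to a case analysis at the single state reached by $s$. The first step is a preliminary observation that the augmented automaton behaves exactly like $H$ along any string of $\mathcal{L}(H)$: for every $w \in \mathcal{L}(H)$ one has $\delta_{H,N_{o,i}}^{aug}(q_0, w) = \delta_H(q_0, w) \in Q_H$. I would prove this by induction on $|w|$, noting that at each prefix the next event is active in $H$, so the first (priority) branch of the definition applies and no transition into $q_{dis}$ is ever taken. In particular, writing $q = \delta_H(q_0, s)$, we get $\delta_{H,N_{o,i}}^{aug}(q_0, s) = q$, so that $s\sigma \in \mathcal{L}(H_{N_{o,i}}^{aug})$ is equivalent to $\delta_{H,N_{o,i}}^{aug}(q, \sigma)$ being defined.

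For the forward direction, suppose $\delta_{H,N_{o,i}}^{aug}(q, \sigma)$ is defined. By the definition this happens in exactly two ways. If $\sigma \in \Gamma_H(q)$ (first branch), then $s\sigma \in \mathcal{L}(H)$ and the desired witness is simply $s' = \varepsilon \in \Sigma^{\le N_{o,i}}$. Otherwise (second branch) there is some $s' \in \Sigma^{\le N_{o,i}}$ with $\sigma \in \Gamma_H(\delta_H(q, s'))$; since this condition already forces $\delta_H(q,s')$ to be defined, we obtain $ss' \in \mathcal{L}(H)$ and hence $ss'\sigma \in \mathcal{L}(H)$, as required.

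For the backward direction, assume $ss'\sigma \in \mathcal{L}(H)$ for some $s' \in \Sigma^{\le N_{o,i}}$. Then $\delta_H(q, s')$ is defined and $\sigma \in \Gamma_H(\delta_H(q, s'))$. If $\sigma \in \Gamma_H(q)$, the first branch defines $\delta_{H,N_{o,i}}^{aug}(q, \sigma) = \delta_H(q,\sigma)$; if $\sigma \notin \Gamma_H(q)$, the witness $s'$ is precisely what the second branch demands, so $\delta_{H,N_{o,i}}^{aug}(q, \sigma) = q_{dis}$. Either way the transition is defined, whence $s\sigma \in \mathcal{L}(H_{N_{o,i}}^{aug})$.

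The only genuinely delicate point, and thus where I would spend the most care, is the preliminary observation: verifying that a string of $\mathcal{L}(H)$ is never diverted into $q_{dis}$ by the augmented automaton. This rests on the branches of $\delta_{H,N_{o,i}}^{aug}$ being mutually exclusive (the $q_{dis}$ branch requires $\sigma \notin \Gamma_H(q)$) and on $q_{dis}$ admitting no outgoing transitions, so that tracing $s$ from $q_0$ is forced to follow $H$ step by step. Everything else is a direct reading of the two nontrivial branches of the definition, with the case $s' = \varepsilon$ covering the possibility that $\sigma$ is already active at $q$.
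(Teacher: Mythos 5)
Your proof is correct. Note that the paper itself does not prove Lemma~\ref{Lem2}: it defers the proof to \cite{xu20tcns} and offers only the informal remark that the events active at $q\in Q_H$ in $H_{N_{o,i}}^{aug}$ are exactly those defined at states of $H$ reachable from $q$ within $N_{o,i}$ steps. Your argument is precisely the formalization of that remark: the inductive observation that $H_{N_{o,i}}^{aug}$ mimics $H$ along any string of $\mathcal{L}(H)$ (never entering $q_{dis}$, by mutual exclusivity of the branches and determinism), followed by the case split on whether $\sigma\in\Gamma_H(q)$ or the $q_{dis}$ branch fires, with $s'=\varepsilon$ handling the first case and the witness $s'$ handling the second. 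This is the natural and essentially unique way to prove the statement, and all steps check out, so there is no substantive divergence to report.
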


Lemma \ref{Lem2} is proven in \cite{xu20tcns}.
We use an example to illustrate how $H_{N_{o,i}}^{aug}$ is constructed.

\begin{figure}
\centering \subfigure[Automaton
$G$]{\label{Fig11}\includegraphics[width=3.3cm]{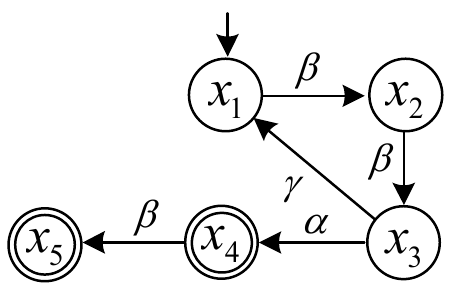}}
\subfigure[Automaton
$H$]{\label{Fig12}\includegraphics[width=2.1cm]{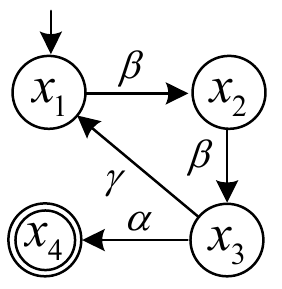}}
\caption{Uncontrolled system
$G$ and desired system $H$ in Example
\ref{Ex1}.} \label{Fig1}
\end{figure}

\begin{figure}
\centering \subfigure[Automaton
$H_{N_{o,1}}^{aug}$]{\label{Fig21}\includegraphics[width=3.4cm]{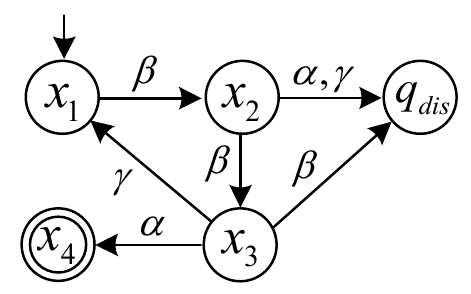}}
\subfigure[Automaton
$H_{N_{o,2}}^{aug}$]{\label{Fig22}\includegraphics[width=3.1cm]{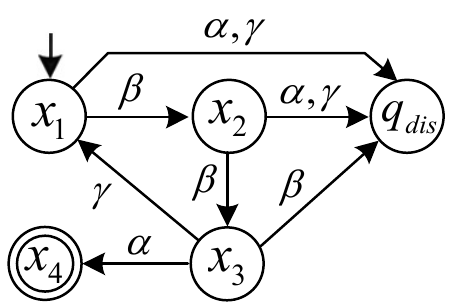}}
\caption{Automata $H_{N_{o,1}}^{aug}$ and $H_{N_{o,2}}^{aug}$ in Example
\ref{Ex1}.} \label{Fig2}
\end{figure}

\begin{example}\label{Ex1}
Consider system $G$ and the desired system $H$ that are depicted in Fig.\ref{Fig11} and Fig.\ref{Fig12}, respectively.
Suppose that there are two local supervisors, i.e., $I=\{1,2\}$.
Let $N_{o,1}=1$ and $N_{o,2}=2$.
The corresponding augmented automata $H^{aug}_{N_{o,1}}$ and $H^{aug}_{N_{o,2}}$ are shown in Fig.\ref{Fig21} and Fig.\ref{Fig22}, respectively.

Let us take the initial state $x_1$ in $H^{aug}_{N_{o,1}}$ and $H^{aug}_{N_{o,2}}$ as an example.
By Fig.\ref{Fig12}, system $H$ can reach states $x_1,x_2$ from state $x_1$ via the string with a length no larger than $1$. 
Since $\beta$ is defined at state $x_1$ and only event $\beta$ is active at state $x_2$, no additional transitions are defined at state $x_1$ in $H_{N_{o,1}}^{aug}$.
On the other hand, system $H$ can reach states $x_1,x_2,x_3$ from state $x_1$ via a sring with a length no larger than $2$. 
Since both $\alpha$ and $\gamma$ are active in state $x_3$ in $H$ and $\alpha,\gamma$ are not defined at state $x_1$, new transitions $\delta^{aug}_{H,N_{o,2}}(x_1,\alpha)=\delta^{aug}_{H,N_{o,2}}(x_1,\gamma)=q_{dis}$ are added into $H_{N_{o,2}}^{aug}$ in Fig.\ref{Fig22}.
\end{example}

\section{{Verification of delay coobservability}}
\label{sec3}

In this section, we consider the verification of  delay coobservability.
By definition, delay coobservability can be verified for each $\sigma \in \Sigma_c$, one by one. 
The system is delay coobservable iff it is delay coobservable for every $\sigma \in \Sigma_c$.
Therefore, without loss of generality (w.l.o.g.), we only present the verification for one controllable event, denoted by $\sigma\in \Sigma_c$.
The proposed procedures can be easily applied to the verification of other controllable events.

For convenience, we assume that $I^c(\sigma)=\{1,\ldots, l\}\subseteq I$.
Let $s$ be a string in $\mathcal{L}(H)$.
The set of confusable string collections associated with $s\in \mathcal{L}(H)$ and $\sigma \in \Sigma_c$ is denoted by:
$$\mathcal{T}_{conf}^{\sigma}(s)=\{(s_1,\ldots, s_l):(\forall i\in I^c(\sigma))P_{i}(s_i)\in \Theta_{i}^{N_{o,i}}(s)\}.$$

Since $\Theta_{i}^{N_{o,i}}(s)$ collects all the possible delayed observations for $s$, the local supervisor $i$ may see $P_{i}(s_i)$ when $s$ occurs in $G$.
Delay coobservability can then be characterized by using the definition of $\mathcal{T}^{\sigma}_{conf}(\cdot)$ as follows.

\begin{theorem}\label{Theo1}
$\mathcal{L}(H)$ is delay coobservable w.r.t. ${N_{o,1}},\ldots,$ ${N_{o,n}}$, $\sigma \in \Sigma_c$, and $\mathcal{L}(G)$ iff the following condition is true:
\begin{align}\label{Eq3}
&(\forall s \in \mathcal{L}(H))s\sigma \in \mathcal{L}(G)\setminus \mathcal{L}(H)
\Rightarrow  \notag \\
&(\forall (s_1,\ldots, s_l)\in \mathcal{T}^{\sigma}_{conf}(s)) (\exists i \in I^c(\sigma))s_i\sigma \notin \mathcal{L}(H_{N_{o,i}}^{aug}).
\end{align}
\end{theorem}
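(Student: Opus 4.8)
The plan is to fix one controllable event $\sigma$ and one string $s \in \mathcal{L}(H)$ with $s\sigma \in \mathcal{L}(G)\setminus\mathcal{L}(H)$, since both Definition \ref{Def1} (restricted to $\sigma$) and condition \eqref{Eq3} are universally quantified over exactly such $s$ and share the same implication hypothesis; it therefore suffices to show that their two \emph{consequents} are equivalent for each such $s$. It is cleanest to argue through the negations of the consequents. Negating the consequent of Definition \ref{Def1} gives $(\forall i \in I^c(\sigma))\,(\Theta_i^{N_{o,i}})^{-1}(\Theta_i^{N_{o,i}}(s))\sigma \cap \mathcal{L}(H)\neq\emptyset$, while negating the consequent of \eqref{Eq3} gives $(\exists (s_1,\ldots,s_l)\in\mathcal{T}^{\sigma}_{conf}(s))(\forall i\in I^c(\sigma))\,s_i\sigma\in\mathcal{L}(H_{N_{o,i}}^{aug})$. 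I would prove these two negated statements equivalent; the theorem then follows by contraposition and by re-quantifying over $s$.

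The core of the argument is a per-supervisor equivalence, established separately for each $i\in I^c(\sigma)$: that $(\Theta_i^{N_{o,i}})^{-1}(\Theta_i^{N_{o,i}}(s))\sigma\cap\mathcal{L}(H)\neq\emptyset$ holds if and only if there exists a string $s_i$ with $P_i(s_i)\in\Theta_i^{N_{o,i}}(s)$ and $s_i\sigma\in\mathcal{L}(H_{N_{o,i}}^{aug})$. For the forward direction I would take $w_i\in(\Theta_i^{N_{o,i}})^{-1}(\Theta_i^{N_{o,i}}(s))$ with $w_i\sigma\in\mathcal{L}(H)$ and apply Lemma \ref{Lem1} to factor $w_i=u_iv_i$ where $P_i(u_i)\in\Theta_i^{N_{o,i}}(s)$ and $v_i\in\Sigma^{\le N_{o,i}}$; prefix-closure of $\mathcal{L}(H)$ forces $u_i\in\mathcal{L}(H)$, and $u_iv_i\sigma\in\mathcal{L}(H)$ lets Lemma \ref{Lem2} conclude $u_i\sigma\in\mathcal{L}(H_{N_{o,i}}^{aug})$, so $s_i:=u_i$ works. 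The reverse direction runs the same two lemmas backward: given such an $s_i$, Lemma \ref{Lem2} supplies $s'\in\Sigma^{\le N_{o,i}}$ with $s_is'\sigma\in\mathcal{L}(H)$, and Lemma \ref{Lem1} identifies $w_i:=s_is'$ as an element of $(\Theta_i^{N_{o,i}})^{-1}(\Theta_i^{N_{o,i}}(s))$ with $w_i\sigma\in\mathcal{L}(H)$.

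Finally I would lift this per-$i$ equivalence to the full claim. The negated consequent of Definition \ref{Def1} is exactly the conjunction over $i\in I^c(\sigma)$ of the left-hand sides of the per-$i$ equivalence, and I claim the negated consequent of \eqref{Eq3} is exactly the conjunction over $i$ of the right-hand sides. The only nonroutine point, and where I expect the real subtlety to lie, is the quantifier exchange that rewrites $(\exists (s_1,\ldots,s_l)\in\mathcal{T}^{\sigma}_{conf}(s))(\forall i)\,s_i\sigma\in\mathcal{L}(H_{N_{o,i}}^{aug})$ as $(\forall i)(\exists s_i \text{ with } P_i(s_i)\in\Theta_i^{N_{o,i}}(s))\,s_i\sigma\in\mathcal{L}(H_{N_{o,i}}^{aug})$. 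This exchange is legitimate precisely because $\mathcal{T}^{\sigma}_{conf}(s)$ has product structure: its defining constraint $P_i(s_i)\in\Theta_i^{N_{o,i}}(s)$ is imposed coordinatewise and independently, so witnesses can be chosen one supervisor at a time and then reassembled into a single admissible tuple. Combining this exchange with the per-$i$ equivalence collapses both negated consequents to the same conjunction over $i$, which completes the proof.
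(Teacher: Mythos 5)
Your proposal is correct and follows essentially the same route as the paper's proof: both reduce the statement to converting, for each $i\in I^c(\sigma)$, the condition $(\Theta_i^{N_{o,i}})^{-1}(\Theta_i^{N_{o,i}}(s))\sigma\cap\mathcal{L}(H)\neq\emptyset$ into the existence of $s_i$ with $P_i(s_i)\in\Theta_i^{N_{o,i}}(s)$ and $s_i\sigma\in\mathcal{L}(H_{N_{o,i}}^{aug})$ via Lemmas \ref{Lem1} and \ref{Lem2}, and then assembling/decomposing the tuple in $\mathcal{T}^{\sigma}_{conf}(s)$ coordinatewise. The only difference is presentational: you state the per-supervisor biconditional and the product-structure quantifier exchange explicitly, whereas the paper runs the same lemma chain inside two proofs by contradiction.
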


\begin{proof}
($\Leftarrow$) The proof is by contradiction.
Suppose that $\mathcal{L}(H)$ is not delay coobservable, i.e.,
$(\exists s \in \mathcal{L}(H))[s\sigma\in \mathcal{L}(G) \setminus \mathcal{L}(H)] \wedge [(\forall i \in I^c(\sigma))({\Theta_i^{N_{o,i}}})^{-1}(\Theta_i^{N_{o,i}}(s))\sigma \cap \mathcal{L}(H)\neq\emptyset]$.
Since $(\forall i\in I^c(\sigma))({\Theta_i^{N_{o,i}}})^{-1}(\Theta_i^{N_{o,i}}(s))\sigma \cap \mathcal{L}(H)\neq \emptyset$, by Lemma 1, $(\forall i\in I^c(\sigma))P^{-1}_i(\Theta_i^{N_{o,i}}(s))\Sigma^{\le N_{o,i}}\sigma \cap \mathcal{L}(H) \neq \emptyset$.
By Lemma 2, $(\forall i\in I^c(\sigma))P^{-1}_i(\Theta_i^{N_{o,i}}(s))\sigma \cap \mathcal{L}(H_{N_{o,i}}^{aug})\neq \emptyset$.
Hence, $\exists s_i \in \mathcal{L}(H)$ such that  $P_i(s_i)\in \Theta_i^{N_{o,i}}(s)$ and $s_i\sigma \in \mathcal{L}(H_{N_{o,i}}^{aug})$ for  $i\in I^c(\sigma)$.
Since $I^c(\sigma)=\{1,\ldots,l\}$, $(s_{1},\ldots,s_{l}) \in \mathcal{T}^{\sigma}_{conf}(s)$.
Moreover, since $s\sigma \in \mathcal{L}(G)\setminus \mathcal{L}(H)$ and $s_i\sigma \in \mathcal{L}(H_{N_{o,i}}^{aug})$, $i\in I^c(\sigma)$, we know that (\ref{Eq3}) does not hold.

($\Rightarrow$) Also by contradiction. Suppose that (\ref{Eq3}) does not hold, i.e.,
$(\exists s\in \mathcal{L}(H))[s\sigma\in \mathcal{L}(G) \setminus \mathcal{L}(H)] \wedge [(\exists (s_1,\ldots,s_l) \in \mathcal{T}_{conf}^{\sigma}(s))(\forall i \in I^{c}(\sigma))s_i\sigma \in \mathcal{L}(H_{N_{o,i}}^{aug})].
$
Since $(s_1,\ldots,s_l) \in \mathcal{T}^{\sigma}_{conf}(s)$, by definition,  $P_i(s_i)\in \Theta_i^{N_{o,i}}(s)$ for all $i\in I^c(\sigma)$.
Moreover, since $s_i\sigma \in \mathcal{L}(H_{N_{o,i}}^{aug})$, $P^{-1}_i(\Theta_i^{N_{o,i}}(s))\sigma \cap \mathcal{L}(H_{N_{o,i}}^{aug})\neq \emptyset$  for all $i\in I^c(\sigma)$.
By Lemma 2,  $P^{-1}_i(\Theta_i^{N_{o,i}}(s))\Sigma^{\le N_{o,i}}\sigma \cap \mathcal{L}(H) \neq \emptyset$ for all $i\in I^c(\sigma)$.
By Lemma 1, $({\Theta_i^{N_{o,i}}})^{-1}(\Theta_i^{N_{o,i}}(s))\sigma \cap \mathcal{L}(H)\neq \emptyset$ for all $i\in I^c(\sigma)$, which contradicts that $\mathcal{L}(H)$ is delay coobservable. 
\end{proof}

By Theorem \ref{Theo1}, to verify delay coobservability, it suffices to check if there exist $s \in \mathcal{L}(H)$ and $(s_1,\ldots, s_l)\in \mathcal{T}_{conf}^{\sigma}(s)$ causing a violation of $(\ref{Eq3})$.
Before we formally show how to do this, we first partition $\mathcal{L}(H)$ into $$\mathcal{L}(H)= \mathcal{L}^0(H) \dot{\cup} \mathcal{L}^1(H) \dot{\cup} \cdots  \dot{\cup}\mathcal{L}^{N}(H),$$ where $N=\max\{N_{o,1},\ldots, N_{o,l}\}$, such that $\mathcal{L}^k(H)=\{s \in \mathcal{L}(H):|s|=k\}$, $k=0,1,\ldots,N-1$ is the set of strings in $\mathcal{L}(H)$ with the length of $k$, and $\mathcal{L}^{N}(H)=\{s \in \mathcal{L}(H):|s| \ge N\}$ is the set of strings in $\mathcal{L}(H)$ with the length no smaller than $N$.
By the partition, we can verify whether (\ref{Eq3}) is true for strings in $\mathcal{L}(H)$ by checking whether (\ref{Eq3}) is true for strings in $\mathcal{L}^0(H),\mathcal{L}^1(H),\ldots,\mathcal{L}^N(H)$, one by one.

For any  $k\in \{0,1,\ldots,N\}$, we now construct a verifier $V^{k}_{\sigma}$ to check if there exist  $s \in \mathcal{L}^k(H)$ and  $(s_1,\ldots, s_l) \in \mathcal{T}_{conf}^{\sigma}(s)$  leading to a violation of (\ref{Eq3}).
Note that, the construction of $V_{\sigma}^N$ is slightly different from that of $V_{\sigma}^k$, $k=0,1,\ldots,N-1$ because strings in $\mathcal{L}^k(H)$, $k=0,1,\ldots,N-1$ are of the same length, but strings in $\mathcal{L}^N(H)$ are not.

Formally, we first construct 
$
V_{\sigma}^k=(X_{\sigma}^k,\tilde{\Sigma},f_{\sigma}^k, x_{\sigma,0}^k)
$
for $k=0,1,\ldots,N-1$, where
\begin{itemize}
\item{
the state space is $X_{\sigma}^k \subseteq \underbrace{Q_{H} \times  \cdots \times Q_{H}}_{l+1} \times [0,k]$;}
\item{
$\tilde{\Sigma} \subseteq \underbrace{(\Sigma \cup \{\varepsilon\}) \times  \cdots \times (\Sigma \cup \{\varepsilon\})}_{l+1} \setminus  \{(\underbrace{\varepsilon, \ldots, \varepsilon}_{l+1})\}$ is the event set;}
\item{
$x_{\sigma,0}^k=(\underbrace{q_{0}, \ldots, q_{0}}_{l+1},0)$ is the initial state;}
\item{
the transition function $f_{\sigma}^k: X_{\sigma}^k \times
\tilde{\Sigma} \rightarrow X_{\sigma}^k$ is defined as follows.
For each state $\tilde{q}=(q,q_1,\ldots,q_l,d)\in X_{\sigma}^k$ and each event $e \in \Sigma$, we need to consider the following five cases for each $i\in I^c(\sigma)$: 

{$\mathrm{C}_1$: $k-d >N_{o,i}$ and $e \in \Sigma_{o,i}$;}

{$\mathrm{C}_2$: $k-d >N_{o,i}$ and $e \in \Sigma_{uo,i}$;} 

{$\mathrm{C}_3$: $k-d \le N_{o,i}$ and $\sigma \notin \Gamma_{H,N_{o,i}}^{aug}(q_i)$ and $e \in  \Sigma_{o,i}$;}

{$\mathrm{C}_4$: $k-d \le N_{o,i}$ and $\sigma \notin \Gamma_{H,N_{o,i}}^{aug}(q_i)$ and $e \in  \Sigma_{uo,i}$;}

{$\mathrm{C}_5$: $k-d \le N_{o,i}$ and $\sigma \in \Gamma_{H,N_{o,i}}^{aug}(q_i)$.}

Then, the following two types of transitions are defined in $V_{\sigma}^k$:

\begin{enumerate}
\item if $d+1 \le k$, $\delta_H(q,e)$ is defined, and for each $i \in I^c(\sigma)$, $\delta_H(q_i,e)$ is defined for $\mathrm{C}_1$ or $\mathrm{C}_3$. Then, a first type of transition is defined at $\tilde{q}$ as:
\begin{align}\label{Eq5-1}
&f_{\sigma}^k((q,q_1,\ldots,q_l,d),(e,e_1,\ldots,e_l))= \notag \\
&(\delta_H(q,e),\delta_H(q_1,e_1),\ldots,\delta_H(q_l,e_l),d+1),
\end{align}
where, for all $i \in I^c(\sigma)$,
\lfteqn
e_i= \begin{cases}
e & \text{if}\ \mathrm{C}_1\ \text{or}\ \mathrm{C}_3 \\
\varepsilon & \text{if}\ \mathrm{C}_2 \ \text{or}\ \mathrm{C}_4 \ \text{or}\ \mathrm{C}_5.
\end{cases}
\ndeqn

\item for each $i\in I^c(\sigma)$, if $\delta_H(q_i,e)$ is defined for  $\mathrm{C}_2$ or $\mathrm{C}_4$, then a second type of transition is defined at $\tilde{q}$ as:
\begin{align}\label{Eq5-2}
&f_{\sigma}^k((q,q_1,\ldots,q_l,d),(\varepsilon,\varepsilon,\ldots, \varepsilon, \underset{(i+1)^{st}}{e}, \varepsilon, \ldots,\varepsilon))= \notag\\
&(q, q_1,\ldots,q_{i-1}, \delta_H(q_{i},e), q_{i+1}, \ldots, q_{l},d).
\end{align}
\end{enumerate}

}\end{itemize}

The construction of $V^k_{\sigma}$ is briefly summarized as follows. 
For any state $\tilde{q}=(q,q_1,\ldots,q_l,d)$ in $V_{\sigma}^k$, the first component $q$ tracks the string that has occurred in the system and the integer $d$ records the length of this string.
Therefore, in $\mathrm{C}_1 \sim \mathrm{C}_5$, $k-d$ is the distance of the first component between the current state  $\tilde{q}=(q,q_1,\ldots,q_l,d)$ and  $\tilde{q}'=(q',q'_1,\ldots,q'_l,d')$ with $d'=k$.

If an event occurs in the system, i.e., the first component of the event defined at $\tilde{q}=(q,q_1,\ldots,q_l,d)$ is not $\varepsilon$ but some $e \in \Sigma$,  then for all $i \in I^c(\sigma)$ satisfying $\mathrm{C}_1$ or $\mathrm{C}_3$, the $i+1$st component of $\tilde{q}$, i.e., $q_i$ shall move together with the system to match the observation of $e$.
As shown in (\ref{Eq5-1}), for all $i \in I^c(\sigma)$ satisfying $\mathrm{C}_1$ or $\mathrm{C}_3$, we have $e_i=e$.
However, for all $i \in I^c(\sigma)$ satisfying $\mathrm{C}_5$, we keep $q_i$ unchanged in the successor states of $\tilde{q}$ (if they exist) no matter whether the occurrence of $e$ is observable to supervisor $i$ or not.
Additionally, when an event occurs in the system, $d$ is updated to $d+1$ to record the length of the system string.
Since $V_{\sigma}^k$ is used to consider all the strings in $\mathcal{L}^k(H)$, a first type of transition is only defined at a state $(q,q_1,\ldots,q_l,d)\in X^k_{\sigma}$ with $d+1 \le k$.

\textcolor{blue}{If no event occurs in the system, i.e., the first component of the event defined at $\tilde{q}=(q,q_1,\ldots,q_l,d)$ is $\varepsilon$, for all $i\in I^c(\sigma)$ satisfying $\mathrm{C}_2$ or $\mathrm{C}_4$, the $i+1$st component of state $\tilde{q}$, i.e., $q_i$, could move by itself with the unobservable event occurrence of $e$.
As shown in (\ref{Eq5-2}), for any $i \in I^c(\sigma)$ satisfying $\mathrm{C}_2$ or $\mathrm{C}_4$,  $q_i$ is updated to $\delta_H(q_i,e)$.
Meanwhile, for any $i\in I^c(\sigma)$ satisfying $\mathrm{C}_5$,  $q_i$ cannot be changed even if the occurrence of $e$ is unobservable to supervisor $i$.}

\textcolor{blue}{Intuitively, for any $i \in I^c(\sigma)$ satisfying $\mathrm{C}_1 \sim \mathrm{C}_4$, $q_i$ tracks a string having the same natural projection as the system string tracked by $q$.
For any $i \in I^c(\sigma)$ satisfying $\mathrm{C}_5$, $q_i$ tracks a string $s_i \in \mathcal{L}(H)$ such that $P_i(s_i) \in \Theta_i^{N_{o,i}}(s)\wedge s_i\sigma \in \mathcal{L}(H_{N_{o,i}}^{aug})$,  where $s$ is the system string tracked by $q$.}

\begin{remark}
\textcolor{blue}{The verifier $V_{\sigma}^{k}$ can be used to verify if (\ref{Eq3}) is true for each $s \in \mathcal{L}^k(H)$ in the following sense: for each  $s \in \mathcal{L}^k(H)$, if there exists a $(s_1,\ldots,s_l)\in \mathcal{T}^{\sigma}_{conf}(s)$ causing a violation of (\ref{Eq3}), one can check that there must exist a $(s'_1,\ldots,s'_l)\in \mathcal{T}^{\sigma}_{conf}(s)$ also causing a violation of (\ref{Eq3}), where $s_i'$ is the shortest prefix of $s_i$ satisfying $P_i(s_i')\in \Theta_i^{N_{o,i}}(s)$ and $s'_i\sigma \in \mathcal{L}(H_{N_{o,i}}^{aug})$.
The verifier $V_{\sigma}^k$ indeed tracks all the $(s,s'_1,\ldots,s'_l)$ regardless of $(s,s_1,\ldots,s_l)$.}
\end{remark}

\begin{remark}
\textcolor{blue}{The verification algorithm proposed in
this paper differs crucially from that presented in \cite{xu20tcns} in the following sense: 
for each $s \in \mathcal{L}^k(H)$, the verifier $V_{\sigma}^{k}$ intends to search string collections $(s_1,\ldots, s_l) \in \mathcal{T}_{conf}^{\sigma}(s)$ that lead to a violation of (\ref{Eq3}), whereas the verifiers proposed in \cite{xu20tcns} search all of the string collections $(s_1,\ldots, s_n,s) \in \underbrace{\mathcal{L}(H)\times \cdots \times  \mathcal{L}(H)}_{n+1}$ such that $s$ may look the same as $s_i$ under observation delays, i.e., $\Theta_i^{N_{o,i}}(s)\cap \Theta_i^{N_{o,i}}(s_i)\neq \emptyset$, $i\in I$.
Meanwhile, benefiting from the partition on $\mathcal{L}(H)$,  our verifiers do not need to record the information of events occurring in history.
Thus, compared with the work of \cite{xu20tcns}, the number of states and transitions of the verifiers proposed in this paper is smaller, and the complexity  of the algorithms proposed in this paper is lower.}
\end{remark}

We use an example to illustrate the construction of $V_{\sigma}^k$.

\begin{figure}
\centering \subfigure[Automaton
$G$]{\label{Fig91}\includegraphics[width=2.7cm]{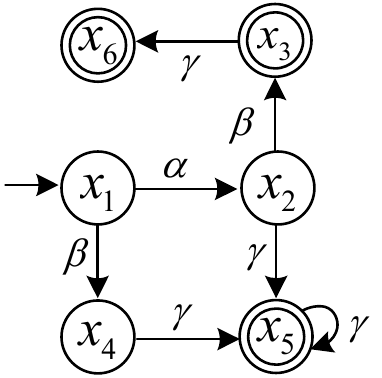}}
\subfigure[Automaton
$H$]{\label{Fig92}\includegraphics[width=2.3cm]{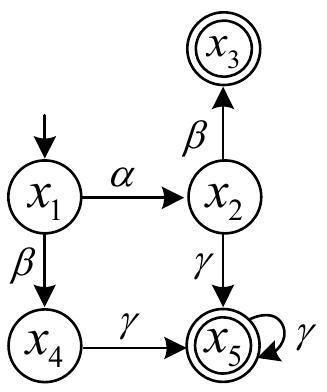}}
\subfigure[$H_{N_{o,1}}^{aug}$ and $H_{N_{o,2}}^{aug}$]{\label{Fig93}\includegraphics[width=2.7cm]{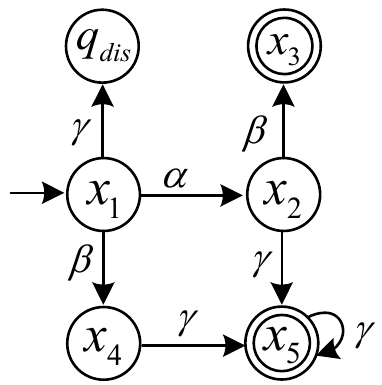}}
\caption{Automata $G$, $H$, and $H_{N_{o,i}}^{aug}$, $i=1,2$.} \label{Fig9}
\end{figure}

\begin{figure}
\centering \subfigure[$V_{\gamma}^{0}$]{\label{Fig10-1}\includegraphics[width=1.6cm]{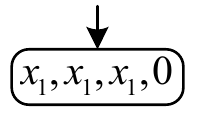}}
\subfigure[$V_{\gamma}^{1}$]{\label{Fig10-2}\includegraphics[width=6.8cm]{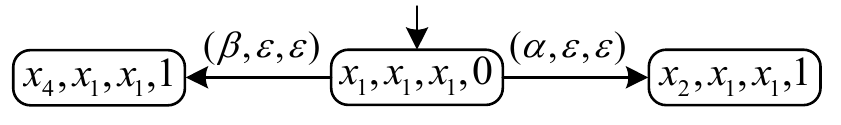}}
\caption{Verifiers $V_{\gamma}^{0}$ and $V_{\gamma}^{1}$  in Example \ref{Ex2}.} \label{Fig4}
\end{figure}

\begin{example}\label{Ex2}

Consider system $G$ and the desired system $H$ that are depicted in Fig.\ref{Fig91} and Fig.\ref{Fig92},
respectively. 
Suppose there are two local supervisors, i.e., $I=\{1,2\}$. 
Let $\Sigma_{o,1}=\{\alpha,\gamma\}$, $\Sigma_{o,2}=\{\beta,\gamma\}$, and $\Sigma_{c,1}=\Sigma_{c,2}=\{\gamma\}$.
Additionally, let $N_{o,1}=2$ and $N_{o,2}=1$.
The augmented automata $H^{aug}_{N_{o,1}}$ and $H^{aug}_{N_{o,2}}$ have the same structure, and are depicted in  Fig.\ref{Fig93}.
By definition, $N=\max\{N_{o,1},N_{o,2}\}=2$.
We now construct $V^0_{\gamma}$ and $V^1_{\gamma}$.

The initial state of $V^0_{\gamma}$ is $\tilde{q}_0=(x_1,x_1,x_1,d=0)$.
Since $d+1>0$, by (\ref{Eq5-1}), no first type of event is defined at $\tilde{q}_0$.
Meanwhile, since $0-d = 0 \le N_{o,1}, N_{o,2}$ and $\gamma$ are active at $x_1$ in both $H^{aug}_{N_{o,1}}$ and $H^{aug}_{N_{o,2}}$, i.e., $\gamma \in \Gamma_{H,N_{o,1}}^{aug}(x_1) \wedge \gamma \in \Gamma_{H,N_{o,2}}^{aug}(x_2)$,  $\mathrm{C}_5$ is satisfied in $\tilde{q}_0$ for $i=1,2$.
By (\ref{Eq5-2}), a second type of event is also not defined at $\tilde{q}_0$.
Therefore, $V^0_{\gamma}$ consists of one single state $\tilde{q}_0$, as depicted in  Fig.\ref{Fig10-1}.
 
The initial state of $V^1_{\gamma}$ also is $\tilde{q}_0=(x_1,x_1,x_1,d=0)$.
Since $1-d=1 \le N_{o,1}, N_{o,2}$, $\gamma \in \Gamma_{H,N_{o,1}}^{aug}(x_1)$, and $\gamma \in \Gamma_{H,N_{o,2}}^{aug}(x_1)$, both supervisors 1 and  2 satisfy $\mathrm{C}_5$ in $\tilde{q}_0$.
Moreover, since $d+1=0 \le 1$ and $\alpha,\beta \in \Gamma_H(x_1)$, by (\ref{Eq5-1}),  first type of transitions are defined at $\tilde{q}_0$ as: $f_{\gamma}^1(\tilde{q}_0,(\alpha,\varepsilon,\varepsilon))=(x_2,x_1,x_1,1)$ and $f_{\gamma}^1(\tilde{q}_0,(\beta,\varepsilon,\varepsilon))=(x_4,x_1,x_1,1)$.
On the other hand, since  $\mathrm{C}_5$ is satisfied in $\tilde{q}_0$ for $i=1,2$, by (\ref{Eq5-2}),  no second type of transition is defined at $\tilde{q}_0$.
Overall, $V^1_{\gamma}$ is constructed in Fig.\ref{Fig10-2}.

\end{example}

Proposition \ref{Prop4} reveals that, to verify delay coobservability, it suffices to check whether or
not the verifier $V_{\sigma}^k$ contains a ``bad'' state  $(q,q_1,\ldots, q_l,k)\in X_{\sigma}^k$ such that $\sigma$ is active at $q$ in $G$ and $q_i$ in $H_{N_{o,i}}^{aug}$ for all $i\in I^c(\sigma)$; however, it is not active at $q$ in $H$.

\begin{proposition}\label{Prop4}
For any $s \in \mathcal{L}^k(H)$, $k=0,1,\ldots,N-1$,  (\ref{Eq3}) is true, i.e., 
\begin{align*}
s\sigma \in \mathcal{L}(G)\setminus \mathcal{L}(H)
\Rightarrow &[(\forall (s_1,\ldots, s_l)\in \mathcal{T}_{conf}^{\sigma}(s))  \notag \\
&(\exists i \in I^{c}(\sigma))s_i\sigma \notin \mathcal{L}(H_{N_{o,i}}^{aug})],
\end{align*}
iff $\not\exists (q,q_1,\ldots,q_l,k)\in X^k_{\sigma}$ such that $\sigma \in \Gamma(q)\setminus\Gamma_H(q)$ and $\sigma \in \Gamma_{H,N_{o,i}}^{aug}(q_i)$ for all $i\in I^c(\sigma)$.
\end{proposition}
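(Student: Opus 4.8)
The plan is to prove the statement in its contrapositive form: equation~(\ref{Eq3}) \emph{fails} for some $s \in \mathcal{L}^k(H)$ if and only if $V_\sigma^k$ \emph{contains} a bad state $(q,q_1,\ldots,q_l,k)$ with $\sigma \in \Gamma(q)\setminus\Gamma_H(q)$ and $\sigma \in \Gamma_{H,N_{o,i}}^{aug}(q_i)$ for all $i \in I^c(\sigma)$. The whole argument rests on one reachability invariant that I would establish first and then exploit in both directions. \textbf{Step 1 (the invariant).} By induction on the length of a run of $V_\sigma^k$, I would show that a state $(q,q_1,\ldots,q_l,d)\in X_\sigma^k$ is reachable from $x_{\sigma,0}^k$ iff there are strings $s\in\mathcal{L}(H)$ with $|s|=d$, $\delta_H(q_0,s)=q$, and $s_i\in\mathcal{L}(H)$ with $\delta_H(q_0,s_i)=q_i$ for each $i$, such that: (i) for every $i$ with $\sigma\notin\Gamma_{H,N_{o,i}}^{aug}(q_i)$ or $k-d>N_{o,i}$ (the ``not yet frozen'' case), $P_i(s_i)=P_i(s)$; and (ii) for every $i$ with $k-d\le N_{o,i}$ and $\sigma\in\Gamma_{H,N_{o,i}}^{aug}(q_i)$ (the ``frozen'' case, $\mathrm{C}_5$), $s_i\sigma\in\mathcal{L}(H_{N_{o,i}}^{aug})$ and $P_i(s_i)\in\Theta_i^{N_{o,i}}(s)$. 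The rules (\ref{Eq5-1})--(\ref{Eq5-2}) make this inductive: a first-type transition grows $s$ by one event while $\mathrm{C}_1/\mathrm{C}_3$ advance every still-active component on $i$-observable events (so $P_i(s_i)=P_i(s)$ is preserved), and a second-type transition grows a single $s_i$ on an $i$-unobservable event (again preserving equal projections). The decisive point is that $\mathrm{C}_5$ is absorbing: once $k-d\le N_{o,i}$ and $\sigma\in\Gamma_{H,N_{o,i}}^{aug}(q_i)$, increasing $d$ keeps $k-d\le N_{o,i}$ and freezes $q_i$. For the projection claim in (ii), freezing at length $d_0$ gives $P_i(s_i)=P_i(s^{d_0})=P_i(s_{-(k-d_0)})$ with $k-d_0\in[0,N_{o,i}]$, hence $P_i(s_i)\in\Theta_i^{N_{o,i}}(s)$ by definition of $\Theta_i^{N_{o,i}}$.

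\textbf{Step 2 (bad state $\Rightarrow$ violation).} Suppose $V_\sigma^k$ has a bad state $(q,q_1,\ldots,q_l,k)$. Applying the invariant with $d=k$ yields $s\in\mathcal{L}^k(H)$ and $s_1,\ldots,s_l\in\mathcal{L}(H)$ with $\delta_H(q_0,s)=q$ and $\delta_H(q_0,s_i)=q_i$. The condition $\sigma\in\Gamma(q)\setminus\Gamma_H(q)$ reads $s\sigma\in\mathcal{L}(G)\setminus\mathcal{L}(H)$, and $\sigma\in\Gamma_{H,N_{o,i}}^{aug}(q_i)$ for every $i$ forces every supervisor into case $\mathrm{C}_5$; hence part (ii) gives $P_i(s_i)\in\Theta_i^{N_{o,i}}(s)$ and $s_i\sigma\in\mathcal{L}(H_{N_{o,i}}^{aug})$ for all $i\in I^c(\sigma)$. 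The former says $(s_1,\ldots,s_l)\in\mathcal{T}_{conf}^{\sigma}(s)$, and together they exhibit precisely the witness making (\ref{Eq3}) fail for $s$.

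\textbf{Step 3 (violation $\Rightarrow$ bad state).} Conversely, assume (\ref{Eq3}) fails for some $s\in\mathcal{L}^k(H)$, so $s\sigma\in\mathcal{L}(G)\setminus\mathcal{L}(H)$ and there is $(s_1,\ldots,s_l)\in\mathcal{T}_{conf}^{\sigma}(s)$ with $s_i\sigma\in\mathcal{L}(H_{N_{o,i}}^{aug})$ for all $i$. Following the Remark, I would replace each $s_i$ by its shortest prefix $s_i'$ with $P_i(s_i')\in\Theta_i^{N_{o,i}}(s)$ and $s_i'\sigma\in\mathcal{L}(H_{N_{o,i}}^{aug})$, and then build a run of $V_\sigma^k$ that simulates $s$ in the first component while driving each $(i{+}1)$st component along $s_i'$ (synchronously on $i$-observable events, by second-type transitions on $i$-unobservable events) until $\sigma$ becomes active at $\delta_H(q_0,s_i')$. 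By minimality, $\sigma$ is inactive in $H_{N_{o,i}}^{aug}$ at every proper prefix, so $\mathrm{C}_5$ fires exactly when the component reaches $s_i'$; writing $P_i(s_i')=P_i(s_{-m_i})$ with $m_i\in[0,N_{o,i}]$, this happens at system length $d=k-m_i$, where $k-d=m_i\le N_{o,i}$, so the freeze is legal. Completing the simulation of $s$ up to $d=k$ lands the run in $(\delta_H(q_0,s),\delta_H(q_0,s_1'),\ldots,\delta_H(q_0,s_l'),k)$, a bad state. Combining Steps~2 and~3 and taking the contrapositive yields the stated equivalence.

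\textbf{Main obstacle.} The delicate point is the scheduling in Step~3: I must interleave the system events and the supervisor self-loops so that each component reaches its freezing state $s_i'$ at a system length with $k-d\le N_{o,i}$, simultaneously and consistently for all $i$, while respecting that a non-frozen component is only allowed to move under $\mathrm{C}_1$--$\mathrm{C}_4$ with matching projection. Verifying that the bookkeeping $P_i(s_i')=P_i(s_{-m_i})$, $m_i\le N_{o,i}$, aligns the freezing instant with the regime $k-d\le N_{o,i}$, together with the fact that minimality of $s_i'$ rules out premature freezing, is the crux; the remainder is routine propagation of the invariant of Step~1.
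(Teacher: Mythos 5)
Your proposal follows essentially the same route as the paper's proof: your Step~2 is the paper's ($\Rightarrow$) direction (the paper reads the witnesses off the run by taking, for each $i$, the shortest prefix of the run after which supervisor $i$ satisfies $\mathrm{C}_5$), and your Step~3 --- minimal witnesses plus a scheduled run in which minimality rules out premature freezing --- is exactly the paper's ($\Leftarrow$) direction, whose inductive construction is the paper's Claim~1 in Appendix~A. Your Step~1 formalizes as a reachability invariant what the paper asserts informally (``$q_i$ tracks a string having the same natural projection as the system string''), and the substance of your argument is correct.

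Two statements need repair, though neither sinks the proof. First, the invariant of Step~1 cannot be an ``iff''. Induction on run length gives only the soundness direction (every reachable state admits such witnesses), which is all Step~2 uses. The converse is false: take $k=1$, one supervisor with $N_{o,1}=1$, an event $e\in\Sigma_{o,1}$, and $H$ with $\sigma\in\Gamma_H(q_0)$, $q_1=\delta_H(q_0,e)$, and $q_1$ deadlocked, so $\sigma\notin\Gamma_{H,N_{o,1}}^{aug}(q_1)$. Then $s=s_1=e$ satisfies your condition (i) at the tuple $(q_1,q_1,1)$, yet that state is unreachable, because $\mathrm{C}_5$ already holds at the initial state $(q_0,q_0,0)$ (indeed $k-0\le N_{o,1}$ and $\sigma\in\Gamma_{H,N_{o,1}}^{aug}(q_0)$), so the supervisor component is frozen at $q_0$ forever. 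This is precisely the premature-freezing phenomenon you identify as the crux: reachability of a prescribed tuple cannot be read off the invariant but must be constructed with the minimal witnesses, as your Step~3 (and the paper's Claim~1) does. Second, in Step~3 the clause ``by minimality, $\sigma$ is inactive in $H_{N_{o,i}}^{aug}$ at every proper prefix'' is too strong: minimality only excludes proper prefixes having \emph{both} properties, and a proper prefix may have $\sigma$ active in $H_{N_{o,i}}^{aug}$ while its projection lies outside $\Theta_i^{N_{o,i}}(s)$. The conclusion still holds because, along the synchronized run, any prefix at which $\mathrm{C}_5$ could fire satisfies $P_i(t_i)=P_i(s^{d'})$ with $k-d'\le N_{o,i}$, hence automatically has $P_i(t_i)\in\Theta_i^{N_{o,i}}(s)$, so a premature freeze would contradict minimality; this is the justification that should be written. (A last bookkeeping point: at a general state with $|s|=d$ your identity should read $P_i(s_i)=P_i(s_{-(d-d_0)})$ with $d-d_0\le k-d_0\le N_{o,i}$; the version with $k-d_0$ is literally correct only at $d=k$, which is where you use it.)
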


\begin{proof}
($\Rightarrow$)
The proof is by contradiction.
Suppose that there exists $(q,q_1,\ldots,q_l,k)\in X^k_{\sigma}$ such that $\sigma \in \Gamma(q)\setminus\Gamma_H(q)$ and $\sigma \in \Gamma_{H,N_{o,i}}^{aug}(q_i)$ for all $i \in I^c(\sigma)$.
Since $(q,q_1,\ldots,q_l,k)\in X^k_{\sigma}$, there exists a $\mu \in \mathcal{L}(V_{\sigma}^k)$ such that $f_{\sigma}^k(x_{\sigma,0}^k,\mu)=(q,q_1,\ldots,q_l,k)$.
We write $f_{\sigma}^k(x_{\sigma,0}^k,\mu^j)=\tilde{x}^{j}=(q^j,q_1^j,\ldots,q_l ^j,n^j)$ for $j=0,1,\ldots, h$, where $h=|\mu|$.
Then,  $q^h=q$, $n^h=k$, and $q_i^h=q_i$ for all $i \in I^c(\sigma)$.
By $k-n^h=0 \le N_{o,i}$ and $\sigma \in \Gamma_{H,N_{o,i}}^{aug}(q_i^h)$,  supervisor $i$ satisfies $\mathrm{C}_5$ in $(q^h,q_1^h,\ldots,q_l^h,n^h)$ for all $i\in I^c(\sigma)$.
w.l.o.g., let $\mu^{k_i}$ be the shortest prefix of $\mu$ such that the supervisor $i$ satisfies $\textup{C}_5$ in state $\tilde{x}^{k_i}=(q^{k_i},q_1^{k_i},\ldots,q_l^{k_i},n^{k_i})$, i.e., $\sigma \in \Gamma_{H,N_{o,i}}^{aug}(q_i^{k_i})$ and $k-n^{k_i} \le N_{o,i}$.
By the definition of $V_{\sigma}^k$, we know that $n^{k_i}$ records the length of the system string tracked by $q^{k_i}$, and $q_i^{k_i}$ tracks a string in $\mathcal{L}(H)$ having the same natural projection as the system string tracked by $q^{k_i}$.
Assuming that $q^h$ tracks $s \in \mathcal{L}^k(H)$, i.e., $q^{h}=\delta_H(q_0,s)$, then $q^{k_i}=\delta_H(q_0,s^{n^{k_i}})$.
Also, assuming that $q_i^{k_i}$ tracks $s_i\in \mathcal{L}(H)$, i.e., $q_i^{k_i}=\delta_H(q_0,s_i)$, then $P_i(s^{n^{k_i}})=P_i(s_i)$.
Since $k-n^{k_i} \le N_{o,i}$ and $|s|=k$, $P_i(s^{n^{k_i}})=P_i(s_i) \in \Theta_i^{N_{o,i}}(s)$ for $i\in I^c(\sigma)$.
Therefore, $(s_1,\ldots,s_l)\in \mathcal{T}_{conf}^{\sigma}(s)$.
Moreover, since $s\sigma \in \mathcal{L}(G)\setminus \mathcal{L}(H)$ and  $s_i\sigma \in \mathcal{L}(H_{N_{o,i}}^{aug})$ for all $i\in I^c(\sigma)$,  it contradicts (\ref{Eq3}).

($\Leftarrow$) The proof is also by contradiction. 
Suppose that (\ref{Eq3}) is not true. 
That is, there exist $s\in \mathcal{L}^k(H)$ and  $(s_1,\ldots,s_l) \in \mathcal{T}_{conf}^{\sigma}(s)$ such that $s\sigma \in \mathcal{L}(G)\setminus\mathcal{L}(H)$ and $s_i\sigma \in \mathcal{L}(H_{N_{o,i}}^{aug})$ for all $i\in I^c(\sigma)$.
w.l.o.g., let $s_i$ be the shortest string in $\mathcal{L}(H)$ satisfying  $P_i(s_i) \in \Theta_i^{N_{o,i}}(s)$ and $s_i\sigma \in \mathcal{L}(H_{N_{o,i}}^{aug})$ for all $i\in I^c(\sigma)$.
We now prove that there exists $(q,q_1,\ldots, q_l,k) \in X_{\sigma}^k$ with $q=\delta_H(q_0,s)$ and $q_i=\delta_H(q_0,s_i)$ for all $i\in I^c(\sigma)$.
Since $P_i(s_i) \in \Theta_i^{N_{o,i}}(s)$, we know $P_i(s_i)=P_i(s^{d_i})$ for some $d_i\in \{\max\{0,k-N_{o,i}\},\ldots,k\}$.
For brevity, we introduce the following claim.

\begin{claim}\label{Cla1}
$\forall d=0,1,\ldots, k$, there exists a $(q,q_1,\ldots,q_l,d)\in X^k_{\sigma}$ such that (i) $q=\delta_H(q_0,s^d)$ and, (ii) for all $i \in I^c(\sigma)$, if $d \ge d_i$,  $q_i=\delta_H(q_0,s_i)$, and if $d<d_i$, $q_i=\delta_H(q_0,t_i)$, where $t_i$ is the longest prefix of $s_i$ with $P_i(t_i )=P_i(s^d)$.
\end{claim}

The proof of Claim \ref{Cla1} is given in  Appendix A.
Since $d_i \in \{\max\{0,k-N_{o,i}\},\ldots,k\}$, we have $k \ge d_i$ for all $i \in I^c(\sigma)$.
By Claim \ref{Cla1}, there exists $(q,q_1,\ldots,q_l,d=k) \in X^k_{\sigma}$ such that $q=\delta_H(q_0,s^k)=\delta_H(q_0,s)$ and $q_i=\delta_H(q_0,s_i)$ for all $i \in I^c(\sigma)$.
Moreover, since $s\sigma \in \mathcal{L}(G)\setminus\mathcal{L}(H)$ and $s_i\sigma \in \mathcal{L}(H_{N_{o,i}}^{aug})$, we have $\sigma \in \Gamma(q)\setminus\Gamma_H(q)$ and $\sigma \in \Gamma_{H,N_{o,i}}^{aug}(q_i)$  for all $i\in I^c(\sigma)$, which is a contradiction.
\end{proof}

\begin{figure*}[htbp]
	\begin{center}
		\includegraphics[width=18cm]{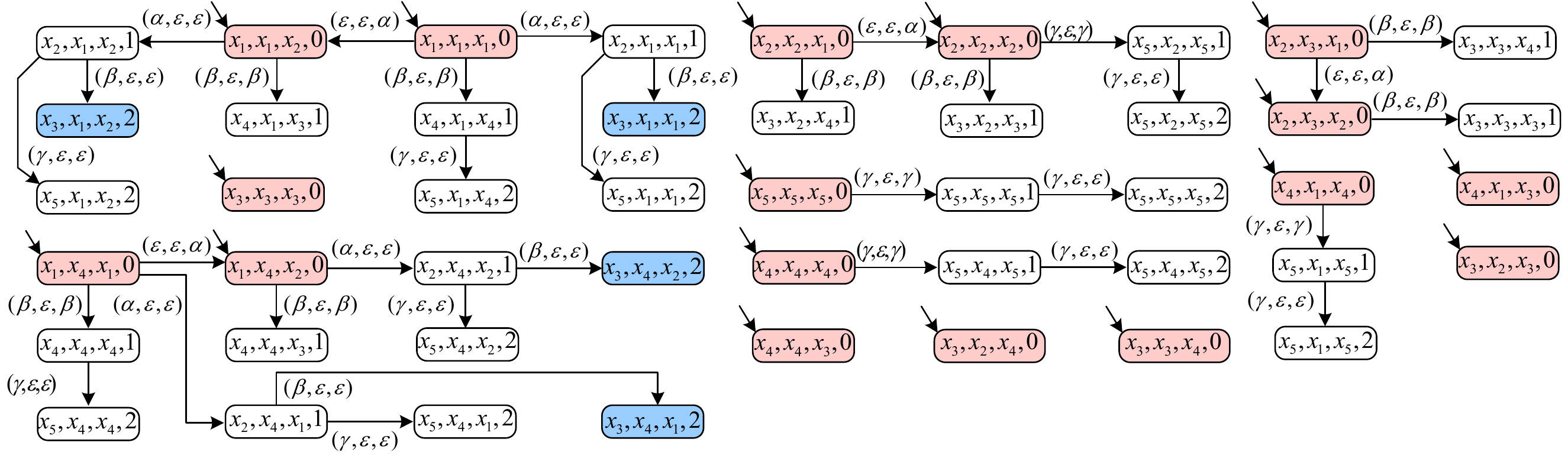}    
		\caption{Verifier $V^2_{\gamma}$ in Example \ref{Ex2}} 
		\label{Fig10-3}
	\end{center}
\end{figure*}

We next consider the construction of $V_{\sigma}^N$.
The construction of $V_{\sigma}^N$ is very similar to that of $V_{\sigma}^k$, $k=0,1,\ldots, N-1$.
The main difference between them is that  the initial state of $V_{\sigma}^k$ is a single state but the initial states of $V_{\sigma}^N$ are a set of states.

For all $s \in \mathcal{L}^N(H)$, to track $t=s_{-N} \in \mathcal{L}(H)$ and $s_i \in \mathcal{L}(H)$ with $P_i(t)=P_i(s_i)$, $i\in I^c(\sigma)$, we define
\begin{align*}
\Xi=&\{(q,q_1,\ldots,q_l): [(\exists t \in \mathcal{L}(H))q=\delta_H(q_0,t)]\wedge \\
&[(\forall i \in I^c(\sigma))(\exists s_i\in \mathcal{L}(H))q_i=\delta_H(q_0,s_i)\wedge P_i(t)=P_i(s_i)]\},
\end{align*}
as the set of confusable state vectors under $P_1,\ldots, P_l$.\footnote{$\Xi$ can be calculated by constructing a $M$-machine that was studied in \cite{rudie95tac,yoo04tac,yoo02deds} for the verification of coobservability.}
The initial states of $V_{\sigma}^N$ are defined as: $$x_{\sigma,0}^N=\{(q,q_1,\ldots,q_l,0):(q,q_1,\ldots,q_l) \in \Xi\}.$$
By looking $N$ steps forward from all the initial states, we construct $V_{\sigma}^N=(X_{\sigma}^N,\tilde{\Sigma},f_{\sigma}^N, x_{\sigma,0}^N)$, where

\begin{itemize}
\item{
the state space is $X_{\sigma}^N \subseteq \underbrace{Q_{H} \times  \cdots \times Q_{H}}_{l+1} \times [0,N]$;}
\item{
$\tilde{\Sigma} \subseteq \underbrace{(\Sigma \cup \{\varepsilon\}) \times  \cdots \times (\Sigma \cup \{\varepsilon\})}_{l+1} \setminus  \{(\underbrace{\varepsilon, \ldots, \varepsilon}_{l+1})\}$ is the event set;}
\item{
$x_{\sigma,0}^N=\{(q,q_1,\ldots,q_l,0):(q,q_1,\ldots,q_l)\in \Xi\}$ is a set of initial states;}
\item{
the transition function  $f_{\sigma}^N: X_{\sigma}^N \times
\tilde{\Sigma} \rightarrow X_{\sigma}^N$ can be specified in the same way as we specify $f_{\sigma}^k$.
Specifically, for each state $\tilde{q}=(q,q_1,\ldots,q_l,d)\in X_{\sigma}^N$ and each event $e \in \Sigma$, we need to consider the following five cases for each $i\in I^c(\sigma)$: 

{$\mathrm{D}_1$: $N-d >N_{o,i}$ and $e \in \Sigma_{o,i}$;}

{$\mathrm{D}_2$: $N-d >N_{o,i}$ and $e \in  \Sigma_{uo,i} $;} 

{$\mathrm{D}_3$: $N-d \le N_{o,i}$ and $\sigma \notin \Gamma_{H,N_{o,i}}^{aug}(q_i)$ and $e \in \Sigma_{o,i}$;}

{$\mathrm{D}_4$: $N-d \le N_{o,i}$ and $\sigma \notin \Gamma_{H,N_{o,i}}^{aug}(q_i)$ and $e \in  \Sigma_{uo,i}$;}

{$\mathrm{D}_5$: $N-d \le N_{o,i}$ and $\sigma \in \Gamma_{H,N_{o,i}}^{aug}(q_i)$.}

Then, the following two types of transitions are defined in $V_{\sigma}^N$:

\begin{enumerate}
\item if $d+1\le N$, $\delta_H(q,e)$ is defined, and for each $i\in I^c(\sigma)$, $\delta_H(q_i,e)$ is defined for $\mathrm{D}_1$ or $\mathrm{D}_3$. Then, a first type of transition is defined at $\tilde{q}$ as:
\begin{align}\label{Eq7}
&f_{\sigma}^N((q,q_1,\ldots,q_l,d),(e,e_1,\ldots,e_l))=\notag\\
&(\delta_H(q,e),\delta_H(q_1,e_1),\ldots,\delta_H(q_l,e_l),d+1),
\end{align}
where, for all $i \in I^c(\sigma)$
\lfteqn
e_i= \begin{cases}
e & \text{if}\ \mathrm{D}_1\ \text{or}\ \mathrm{D}_3 \\
\varepsilon & \text{if}\ \mathrm{D}_2 \ \text{or}\ \mathrm{D}_4 \ \text{or}\ \mathrm{D}_5.
\end{cases}
 \ndeqn

\item for each $i\in I^c(\sigma)$, if  $\delta_H(q_i,e)$ is defined for $\mathrm{D}_2$ or $\mathrm{D}_4$, a second type of transition is defined at $\tilde{q}$ as:
\begin{align}\label{Eq8}
&f_{\sigma}^N((q,q_1,\ldots,q_l,d),(\varepsilon,\varepsilon,\ldots, \varepsilon, \underset{(i+1)^{st}}{e}, \varepsilon, \ldots,\varepsilon))=\notag\\
&(q, q_1,\ldots,q_{i-1}, \delta_H(q_{i},e), q_{i+1}, \ldots, q_l,d).
\end{align}
\end{enumerate}
}\end{itemize}

We illustrate the construction of $V_{\sigma}^N$ using the following example.

\begin{example}\label{Ex3}

Continuing with Example \ref{Ex2}, since $N_{o,1}=2$ and $N_{o,2}=1$, we have $N=\max{\{N_{o,1},N_{o,2}\}}=2$.
The verifier $V_{\gamma}^2$ is constructed in Fig.\ref{Fig10-3}.
All the initial states of $V_{\gamma}^2$ are highlighted in red.
Let us take the initial state $\tilde{q}_0=(x_2,x_3,x_1,d=0)$ in $V_{\gamma}^2$ as an example.

For state $\tilde{q}_0=(x_2,x_3,x_1,d=0)$ and event $\beta \in \Sigma$, since $N-d=2 \le N_{o,1}$, $\gamma \notin \Gamma_{H,N_{o,1}}^{aug}(x_3)$, and $\beta \in \Sigma_{uo,1}$,  supervisor 1 satisfies $\mathrm{D}_4$ in $\tilde{q}_0$.
Meanwhile, since $N-d=2 > N_{o,2}$ and $\beta \in \Sigma_{o,1}$,  supervisor 2 satisfies $\mathrm{D}_1$ in $\tilde{q}_0$.
Therefore,  by (\ref{Eq7}), a first type of transition is defined at $\tilde{q}_0$ as $f_{\gamma}^2(\tilde{q}_0,(\beta,\varepsilon,\beta))=(x_3,x_3,x_4,1)$.

For state $\tilde{q}_0=(x_2,x_3,x_1,d=0)$ and event $\alpha \in \Sigma$, since $N-d=2 > N_{o,2}$, $\gamma \in \Gamma_{H,N_{o,2}}^{aug}(x_1)$, and $\alpha \in \Sigma_{uo,2}$, supervisor 2 satisfies $\mathrm{D}_4$ in $\tilde{q}_0$.
Since $\delta_H(x_1,\alpha)=2$, by (\ref{Eq8}), a second type of transition is defined at $\tilde{q}_0$ as: $f_{\gamma}^2(\tilde{q}_0,(\varepsilon,\varepsilon,\alpha))=(x_2,x_3,x_2,0)$.

In this way, we can define all the transitions of $V_{\gamma}^2$.

\end{example}

\begin{proposition}\label{Prop3}
For any $s \in \mathcal{L}^N(H)$, (\ref{Eq3}) is true, i.e., 
\begin{align*}
s\sigma \in \mathcal{L}(G)\setminus \mathcal{L}(H)
\Rightarrow &[(\forall (s_1,\ldots, s_l)\in \mathcal{T}^{\sigma}_{conf}(s))  \notag \\
&(\exists i \in I^{c}(\sigma))s_i\sigma \notin \mathcal{L}(H_{N_{o,i}}^{aug})],
\end{align*}
iff $\not\exists (q,q_1,\ldots,q_l,N)\in X^N_{\sigma}$ such that $\sigma \in \Gamma(q)\setminus\Gamma_H(q)$ and $\sigma \in \Gamma_{H,N_{o,i}}^{aug}(q_i)$ for all $i\in I^c(\sigma)$.
\end{proposition}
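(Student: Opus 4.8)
The plan is to prove both directions by contraposition, mirroring the proof of Proposition~\ref{Prop4}, with the essential modification being the treatment of the set $\Xi$ of initial state vectors. The only structural difference between $V_\sigma^N$ and $V_\sigma^k$ for $k<N$ is that the former starts from the set $\Xi$ rather than the single vertex $(q_0,\ldots,q_0,0)$; consequently the prefix of the system string preceding its last $N$ events is absorbed into the choice of initial state, while the $N$-step lookahead encoded in the transition rules $\mathrm{D}_1$--$\mathrm{D}_5$ is carried out exactly as for $V_\sigma^k$. The index bookkeeping is shifted so that counter value $d$ in $V_\sigma^N$ corresponds to position $|t|+d$ within $s$, where $t=s_{-N}$ and $|t|=|s|-N$.

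For the ($\Leftarrow$) direction I would assume that (\ref{Eq3}) fails for some $s\in\mathcal{L}^N(H)$, obtaining $(s_1,\ldots,s_l)\in\mathcal{T}^\sigma_{conf}(s)$ with $s\sigma\in\mathcal{L}(G)\setminus\mathcal{L}(H)$ and $s_i\sigma\in\mathcal{L}(H_{N_{o,i}}^{aug})$ for every $i\in I^c(\sigma)$, taking each $s_i$ shortest w.l.o.g. Setting $t=s_{-N}$, the first step is to verify that the associated initial vector lies in $\Xi$: because $P_i(s_i)\in\Theta_i^{N_{o,i}}(s)$ and $N_{o,i}\le N$, we have $P_i(s_i)=P_i(s^{d_i})$ with $d_i\ge|s|-N_{o,i}\ge|t|$, so $t$ is a prefix of $s^{d_i}$ and the longest prefix $r_i$ of $s_i$ with $P_i(r_i)=P_i(t)$ is well defined, yielding $(\delta_H(q_0,t),\delta_H(q_0,r_1),\ldots,\delta_H(q_0,r_l))\in\Xi$; the corresponding state with counter $0$ is thus an initial state of $V_\sigma^N$. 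The second step is an inductive claim analogous to Claim~\ref{Cla1}: for each $d=0,1,\ldots,N$ there is a reachable state $(q,q_1,\ldots,q_l,d)\in X_\sigma^N$ with $q=\delta_H(q_0,s^{|t|+d})$, and for each $i$, $q_i=\delta_H(q_0,s_i)$ once $\mathrm{D}_5$ has been triggered and $q_i=\delta_H(q_0,t_i)$ beforehand, where $t_i$ is the longest prefix of $s_i$ projecting to $P_i(s^{|t|+d})$. Evaluating at $d=N$ gives $q=\delta_H(q_0,s)$ and $q_i=\delta_H(q_0,s_i)$, and the hypotheses on $s$ and the $s_i$ translate into $\sigma\in\Gamma(q)\setminus\Gamma_H(q)$ and $\sigma\in\Gamma_{H,N_{o,i}}^{aug}(q_i)$ for all $i$, producing the forbidden bad state.

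For the ($\Rightarrow$) direction I would argue by contradiction, assuming a bad state $(q,q_1,\ldots,q_l,N)$ exists and tracing a path $\mu\in\mathcal{L}(V_\sigma^N)$ back to an initial state $(q^0,q_1^0,\ldots,q_l^0,0)\in x_{\sigma,0}^N$. By the definition of $\Xi$ there are $t\in\mathcal{L}(H)$ and strings $\hat{s}_i$ with $q^0=\delta_H(q_0,t)$, $q_i^0=\delta_H(q_0,\hat{s}_i)$, and $P_i(\hat{s}_i)=P_i(t)$; reading the first components along $\mu$ reconstructs the $N$-length suffix $w$, so that $s:=tw\in\mathcal{L}^N(H)$ with $q=\delta_H(q_0,s)$, while reading the $(i{+}1)$st components reconstructs the strings $s_i$ tracked by $q_i$. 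As in Proposition~\ref{Prop4}, taking the shortest prefix of $\mu$ at which supervisor $i$ first satisfies $\mathrm{D}_5$ shows $P_i(s_i)\in\Theta_i^{N_{o,i}}(s)$, so $(s_1,\ldots,s_l)\in\mathcal{T}^\sigma_{conf}(s)$, and $\sigma\in\Gamma(q)\setminus\Gamma_H(q)$ together with $\sigma\in\Gamma_{H,N_{o,i}}^{aug}(q_i)$ contradicts (\ref{Eq3}).

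The main obstacle I expect is the bookkeeping tying the verifier's counter to the position within $s$ under the $\Xi$-initialization: one must check that the index translation $|t|+d$ with $|t|=|s|-N$ makes the distance condition $N-d\le N_{o,i}$ coincide exactly with the confusion-window condition $d_i\ge|s|-N_{o,i}$, and that the projection matching established at the initial vector through $\Xi$ is correctly propagated by the first- and second-type transitions so that the prefixes $t_i$ in the inductive claim remain well defined throughout. Once this correspondence is pinned down, the remaining steps are routine adaptations of the arguments already used for $V_\sigma^k$ in Proposition~\ref{Prop4}.
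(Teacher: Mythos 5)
Your proposal is correct and takes essentially the same route as the paper's proof: both directions by contradiction, with the ($\Leftarrow$) direction anchoring an initial vector in $\Xi$ and invoking an inductive claim identical in substance to the paper's Claim~\ref{Cla2} (proven in Appendix B), and the ($\Rightarrow$) direction tracing the path $\mu$ back to an initial state, using the definition of $\Xi$ to recover matching prefixes, and applying the shortest-prefix-at-which-$\mathrm{D}_5$-holds argument to get $(s_1,\ldots,s_l)\in\mathcal{T}^{\sigma}_{conf}(s)$. The only cosmetic difference is that you seed the induction directly with the longest prefix $r_i$ of $s_i$ projecting to $P_i(s_{-N})$, whereas the paper starts from a decomposition $s_i=u_iw_i$ and reaches that longest prefix via second-type transitions in the base case of Claim~\ref{Cla2}.
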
 
\begin{proof}
($\Rightarrow$)
The proof is by contradiction.
Suppose that there exists $(q,q_1,\ldots,q_l,N)\in X^N_{\sigma}$ such that $\sigma \in \Gamma(q)\setminus\Gamma_H(q)$ and $\sigma \in \Gamma_{H,N_{o,i}}^{aug}(q_i)$ for all $i\in I^c(\sigma)$.
Since $(q,q_1,\ldots,q_l,N)\in X^N_{\sigma}$, there exist $\tilde{x}_0=(x,x_1,\ldots,x_l,0) \in x_{\sigma,0}^N$ and $\mu \in \mathcal{L}(V_{\sigma}^N)$ such that $f_{\sigma}^N(\tilde{x}_0,\mu)=(q,q_1,\ldots,q_l,N)$.
Since $(x,x_1,\ldots,x_l,0) \in x_{\sigma,0}^N$, $(x,x_1,\ldots,x_l)\in \Xi$.
Thus, there exist $s,s_1,\ldots,s_l \in \mathcal{L}(H)$ such that $\delta_H(q_0,s)=x$ and $\delta_H(q_0,s_i)=x_i$ and $P_i(s)=P_i(s_i)$ for all $i\in I^c(\sigma)$.
We write $f_{\sigma}^N(\tilde{x}_0,\mu^j)=\tilde{x}^{j}=(q^j,q_1^j,\ldots,q_l^j,n^j)$ for $j=0,1,\ldots, h$, where $h=|\mu|$.
Then, $q^h=q$, $n^h=N$, and $q_i^h=q_i$ for all $i\in I^c(\sigma)$.
Since $N-n^h \le N_{o,i}$ and $\sigma \in \Gamma_{H,N_{o,i}}^{aug}(q_i^h)$, we know supervisor $i$ satisfies $\mathrm{D}_5$ in $(q^h,q_1^h,\ldots,q_n^h,n^h)$ for all $i\in I^c(\sigma)$.
w.l.o.g.,  let $\mu^{k_i}$ be the shortest prefix of $\mu$ such that the supervisor $i$ satisfies $\textup{D}_5$ in state $\tilde{x}^{k_i}=f_{\sigma}^N(\tilde{x}_0,\mu^{k_i})=(q^{k_i},q_1^{k_i},\ldots,q_n^{k_i},n^{k_i})$, i.e., $\sigma \in \Gamma_{H,N_{o,i}}^{aug}(q_i^{k_i})$ and $N-n^{k_i} \le N_{o,i}$.
Assume that $q^h$ tracks $st \in \mathcal{L}^N(H)$ with $|t|=N$ such that $x=\delta_H(q_0,s)$ and $q^h=\delta_H(x,t)$.
Then, we have $q^{k_i}=\delta_H(x,t^{n^{k_i}})=\delta_H(q_0,st^{n^{k_i}})$.
Also, assume that $q_i^{k_i}$ tracks some $s_it_i\in \mathcal{L}(H)$ such that $x_i=\delta_H(q_0,s_i)$ and $q_i^{k_i}=\delta_H(x_i,t_i)$.
By the definition of $V_{\sigma}^N$,   $P_i(t^{n^{k_i}})=P_i(t_i)$.
Moreover, since $P_i(s)=P_i(s_i)$, $P_i(st^{n^{k_i}})=P_i(s_it_i)$.
Since $N-n^{k_i} \le N_{o,i}$ and $|t|=N$, $P_i(st^{n^{k_i}})=P_i(s_it_i) \in \Theta_i^{N_{o,i}}(st)$.
Therefore, $(s_1t_1,\ldots,s_lt_l)\in \mathcal{T}_{conf}^{\sigma}(st)$.
Moreover, since $st\sigma \in \mathcal{L}(G)\setminus \mathcal{L}(H)$ and  $s_it_i\sigma \in \mathcal{L}(H_{N_{o,i}}^{aug})$ for all $i\in I^c(\sigma)$, it contradicts (\ref{Eq3}).

($\Leftarrow$) The proof is also by contradiction. 
Suppose that there exist $s \in \mathcal{L}^N(H)$ and $(s_1,\ldots, s_l) \in \mathcal{T}^{\sigma}_{conf}(s)$ such that $s\sigma \in \mathcal{L}(G)\setminus\mathcal{L}(H)$ and $s_i\sigma \in \mathcal{L}(H_{N_{o,i}}^{aug})$ for all $i\in I^c(\sigma)$.
w.l.o.g., let  $s_i$ be the shortest string in $\mathcal{L}(H)$ such that $P_i(s_i) \in \Theta_i^{N_{o,i}}(s)$ and $s_i\sigma \in \mathcal{L}(H_{N_{o,i}}^{aug})$ for all $i\in I^c(\sigma)$.
We next prove there exists $(q,q_1,\ldots, q_l,N) \in X_{\sigma}^N$ such that $q=\delta_H(q_0,s)$ and $q_i=\delta_H(q_0,s_i)$ for all $i\in I^c(\sigma)$.

Since $|s| \ge N$, w.l.o.g., we write $s=s_{-N}w$ with $w=e_1 \cdots e_N$.
Since $P_i(s_i) \in \Theta_i^{N_{o,i}}(s)$, we know $P_i(s_i)=P_i(s_{-N}w^{d_i})$ for some $d_i\in \{N-N_{o,i},\ldots,N\}$.
We write $s_i=u_iw_i$ such that $P_i(u_i)=P_i(s_{-N})$ and $P_i(w_i)=P_i(w^{d_i})$.
We also write $x=\delta_H(q_0,s_{-N})$, $x_i=\delta_H(q_0,u_i)$ for all $i\in I^c(\sigma)$.
Since $P_i(s_{-N})=P_i(u_i)$ for all $i\in I^c(\sigma)$, $(x,x_1,\ldots,x_l)\in \Xi$.
Hence, $(x,x_1,\ldots,x_l,0) \in x^N_{\sigma,0}$.
For brevity, we introduce the following claim.

\begin{claim}\label{Cla2}
$\forall d=0,1,\ldots,N$, there exists $(q,q_1,\ldots,q_l,d)\in X^N_{\sigma}$ such that (i) $q=\delta_H(q_0,s_{-N}w^d)$ and, (ii) for all $i \in I^c(\sigma)$, if $d \ge d_i$,  $q_i=\delta_H(q_0,s_i)$, and if $d<d_i$, $q_i=\delta_H(q_0,u_iv_i)$, where $u_iv_i$ is the longest prefix of $u_iw_i=s_i$ with $P_i(u_iv_i )=P_i(s_{-N}w^d)$.
\end{claim}

Claim \ref{Cla2} is proven in Appendix B.
Since $N \ge d_i$, by Claim \ref{Cla2}, there exists $(q,q_1,\ldots,q_l,N) \in X^N_{\sigma}$ such that $q=\delta_H(q_0,s_{-N}w)$ and $q_i=\delta_H(q_0,s_i)$ for all $i \in I^c(\sigma)$.
Moreover, since $s_{-N}w\sigma \in \mathcal{L}(G)\setminus\mathcal{L}(H)$ and $s_i\sigma \in \mathcal{L}(H_{N_{o,i}}^{aug})$, we have $\sigma \in \Gamma(q)\setminus\Gamma_H(q)$ and $\sigma \in \Gamma_{H,N_{o,i}}^{aug}(q_i)$  for all $i\in I^c(\sigma)$, which is a contradiction.
\end{proof}

The following theorem provides the means to verify the delay coobservability of $\mathcal{L}(H)$ using $V_{\sigma}^k$, $k=0,1,\ldots,N$.

\begin{theorem}\label{Theo2}
$\mathcal{L}(H)$ is delay coobservable w.r.t. $N_{o,1}, \ldots, N_{o,n}$, $\sigma \in \Sigma_c$, and $\mathcal{L}(G)$ iff, 
for any $k=0,1,\ldots,N$, there does not exist a state $(q,q_1,\ldots,q_n,k)\in X^k_{\sigma}$ such that $\sigma \in \Gamma(q)\setminus\Gamma_H(q)$ and $\sigma \in \Gamma_{H,N_{o,i}}^{aug}(q_i)$ for all $i\in I^c(\sigma)$. 

\end{theorem}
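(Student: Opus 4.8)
The plan is to assemble Theorem~\ref{Theo1} with Propositions~\ref{Prop4} and~\ref{Prop3} by exploiting the disjoint partition of the specification language. The statement is an ``iff'' connecting delay coobservability for the fixed event $\sigma$ to the absence of certain ``bad'' states across all the level verifiers $V^k_\sigma$, and essentially all of the analytic content has already been isolated in the two propositions; Theorem~\ref{Theo2} is the clean aggregation of those level-wise results into a single global criterion.

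First I would invoke Theorem~\ref{Theo1} to replace delay coobservability w.r.t. $N_{o,1},\ldots,N_{o,n}$, $\sigma$, and $\mathcal{L}(G)$ by the equivalent condition that (\ref{Eq3}) holds for every $s\in\mathcal{L}(H)$. Next I would use the partition $\mathcal{L}(H)=\mathcal{L}^0(H)\,\dot{\cup}\,\mathcal{L}^1(H)\,\dot{\cup}\cdots\dot{\cup}\,\mathcal{L}^N(H)$ to split the universal quantifier over $s$: because the blocks are exhaustive and mutually disjoint, (\ref{Eq3}) holds for all $s\in\mathcal{L}(H)$ if and only if, for every $k\in\{0,1,\ldots,N\}$, (\ref{Eq3}) holds for all $s\in\mathcal{L}^k(H)$. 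This is a purely logical step whose only requirement is that every string of $\mathcal{L}(H)$ fall into exactly one block, which is guaranteed by the length-based definition of $\mathcal{L}^0(H),\ldots,\mathcal{L}^N(H)$.

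Then I would translate each per-block condition into a statement about the verifier of the matching level. For each $k\in\{0,1,\ldots,N-1\}$, Proposition~\ref{Prop4} gives that (\ref{Eq3}) holds for all $s\in\mathcal{L}^k(H)$ exactly when $V^k_\sigma$ contains no state $(q,q_1,\ldots,q_l,k)$ with $\sigma\in\Gamma(q)\setminus\Gamma_H(q)$ and $\sigma\in\Gamma_{H,N_{o,i}}^{aug}(q_i)$ for all $i\in I^c(\sigma)$; for the remaining level $k=N$, Proposition~\ref{Prop3} gives the identical equivalence for $\mathcal{L}^N(H)$ and $V^N_\sigma$. Conjoining these $N+1$ equivalences over all $k$ then reproduces precisely the right-hand side of Theorem~\ref{Theo2}.

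The proof is therefore a direct composition of already-established equivalences, so the genuine difficulty is not analytic but resides entirely in Propositions~\ref{Prop4} and~\ref{Prop3}, whose proofs in turn rest on the structural Claims~\ref{Cla1} and~\ref{Cla2} describing how the verifier's $(l+1)$ components simultaneously track the system string and its confusable counterparts under delayed observation. The one point I would check while writing is purely cosmetic: the state vector in the statement is written as $(q,q_1,\ldots,q_n,k)$, whereas under the standing assumption $I^c(\sigma)=\{1,\ldots,l\}$ the verifier states carry only $l+1$ automaton components, so I would keep the indexing consistent with $l$ exactly as in the two propositions being cited.
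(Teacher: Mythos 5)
Your proposal is correct and is exactly the paper's proof: the paper proves Theorem~\ref{Theo2} by directly combining Theorem~\ref{Theo1} with Propositions~\ref{Prop4} and~\ref{Prop3}, using the disjointness of the partition $\mathcal{L}^0(H)\,\dot{\cup}\cdots\dot{\cup}\,\mathcal{L}^N(H)$ to split the quantifier just as you describe. Your closing remark about the state vector being written with $n$ components instead of $l$ correctly identifies a notational slip in the theorem statement, not a gap in the argument.
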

\begin{proof}
The proof directly follows from Theorem \ref{Theo1} and Propositions \ref{Prop4} and \ref{Prop3}.
\end{proof}

Next, we use an example to illustrate how to verify delay coobservability using the proposed approach.

\begin{example}\label{Ex4}
\textcolor{blue}{We continue with Examples \ref{Ex2} and \ref{Ex3}. 
By assumption, we know $\Sigma_{o,1}=\{\alpha,\gamma\}$, $\Sigma_{o,2}=\{\beta,\gamma\}$, $N_{o,1}=2$, and $N_{o,2}=1$.
We construct verifiers $V_{\gamma}^0$, $V_{\gamma}^1$, and $V_{\gamma}^2$ in Fig.\ref{Fig10-1}, Fig.\ref{Fig10-2}, and Fig.\ref{Fig10-3}, respectively.
To verify delay coobservability,  it is necessary to check if there exists a ``bad'' state $(q,q_1,q_2,k)$ in $V_{\gamma}^0$, $V_{\gamma}^1$, and $V_{\gamma}^2$ such that $\gamma \in \Gamma(q)\setminus \Gamma_H(q)$, $\gamma \in \Gamma_{H,N_{o,1}}^{aug}(q_1)$, and $ \gamma \in \Gamma_{H,N_{o,2}}^{aug}(q_2)$.
The verifier $V_{\gamma}^2$ has several ``bad'' states, for instance, $(x_3,x_1,x_1,2)$, $(x_3,x_1,x_2,2)$, $(x_3,x_4,x_1,2)$, and $(x_3,x_4,x_2,2)$ (highlighted in blue in Fig.\ref{Fig10-3}).
By Theorem \ref{Theo2}, the existence of ``bad'' states implies that $\mathcal{L}(H)$ is not delay coobservable w.r.t. ${N_{o,1}}$, ${N_{o,2}}$, and $\mathcal{L}(G)$.}

\textcolor{blue}{As pointed out in Remark 1, a ``bad'' state $(q,q_1,q_2,k)$ tracks some $s \in \mathcal{L}^k(H)$ and $(s_1,s_2)\in \mathcal{T}_{conf}^{\sigma}(s)$  violating (\ref{Eq3}).
It is shown in Fig.\ref{Fig10-3} that the ``bad'' state $(x_3,x_1,x_1,2)$ of $V_{\gamma}^2$ tracks $s=\alpha\beta$ and $(s_1,s_2)=(\varepsilon,\varepsilon)\in \mathcal{T}_{conf}^{\sigma}(s)$.
By the definition of $\mathcal{T}^{\sigma}_{conf}(\cdot)$, when $s=\alpha\beta$ occurs in $G$, the supervisor $i$ may see $P_i(s_i)=P_i(\varepsilon) \in \Theta_i^{N_{o,i}}(s)$, $i=1,2$.
Since $s_i\gamma \in \mathcal{L}(H_{N_{o,i}}^{aug})$, there exists $t_i=\alpha \in \Sigma^{\le N_{o,i}}$ such that  $s_it_i\gamma \in \mathcal{L}(H)$ for $i=1,2$.
Therefore,  when $s=\alpha\beta$ occurs in $G$, control conflicts may arise for both supervisors $1$ and $2$, because $s\gamma \in \mathcal{L}(G)\setminus \mathcal{L}(H)$, $s_it_i\gamma \in \mathcal{L}(H)$, and $P_i(s_i) \in \Theta_i^{N_{o,i}}(s) \cap \Theta_i^{N_{o,i}}(s_it_i) \neq \emptyset$ for $i=1,2$.}

\end{example}

\section{\textcolor{blue}{Computational complexity}}

In this section, we analyze the computational complexity of the proposed approach, and compare it with that of the existing one in the literature.

The computational complexity is determined by the number of transitions of all the verifiers with the verification being done for one controllable event at a time,
The verification can be done for one controllable event at a time.
As shown in Section III, for each controllable event $\sigma \in \Sigma_c$, we need to construct verifiers $V_{\sigma}^1,\ldots, V_{\sigma}^N$, where $N=\max\{N_{o,1},\ldots, N_{o,l}\} \le \max\{N_{o,1},\ldots, N_{o,n}\}$.
By the definition of $X_{\sigma}^k$, the number of states of $V_{\sigma}^k$ is upper bounded by $(k+1)\times |Q_H|^{n+1}$.
Since there could be $(n+1)\times |\Sigma|$ transitions in each state of $V_{\sigma}^k$,  the number of transitions in $V_{\sigma}^k$ is upper bounded by  $(n+1)\times(k+1)\times |Q_H|^{n+1} \times |\Sigma|$.
Therefore, the worst-case complexity for constructing $V_{\sigma}^1,\ldots, V_{\sigma}^N$ is $\mathcal{O}((n+1) \times |Q_H|^{n+1}\times |\Sigma|  \times \sum_{k=0}^{\max\{N_{o,1},\ldots, N_{o,n}\}} (k+1))$.
Since each controllable event can be verified independently, the worst-case complexity of verifying delay coobservability is $\mathcal{O}((n+1) \times |Q_H|^{n+1}\times |\Sigma|^2  \times \sum_{k=0}^{\max\{N_{o,1},\ldots, N_{o,n}\}} (k+1))$,  which is polynomial w.r.t. $|Q_H|$, $|\Sigma|$, and ${\max\{N_{o,1},\ldots, N_{o,n}\}}$, and is exponential only w.r.t. $n$.

Next, let us briefly recall the algorithm proposed in \cite{xu20tcns} for the verification of delay coobservability.

For each combination of the fixed observation delays $\omega=(m_1,\ldots ,m_n)\in [0,N_{o,1}]\times \cdots \times [0,N_{o,n}]$, the authors in  \cite{xu20tcns} constructed a verifier $V^{\omega}$ to track all the $(s_1,\ldots, s_n, s)\in \underbrace{\mathcal{L}(H)\times \cdots \times \mathcal{L}(H)}_{n+1}$ such that $(\forall i \in I)P_i(s_i)=P_i(s_{-m_i})$.
The complexity for constructing $V^{\omega}$ is the order of $(n+1)\times |Q_H|^{n+1}\times|\Sigma|^{\max\{N_{o,1},\ldots, N_{o,n}\}+1}$.\footnote{The time complexity for constructing a verifier was originally described as $\mathcal{O}(|Q_H|^{n+1}\times|\Sigma|^{\max\{N_{o,1},\ldots, N_{o,n}\}})$ in \cite{xu20tcns} because it only considers the state space of the verifier.
However, to construct the verifier $V^{\omega}$, we need to consider the transitions in the verifier.
In each state of $V^{\omega}$, there could be $(n+1)\times|\Sigma|$ transitions. 
Therefore,  the original time complexity should be multiplied by $(n+1)\times|\Sigma|$.}
Since $|[0,N_{o,1}]|\times \cdots \times |[0,N_{o,n}]|=\prod_{i=1}^n(N_{o,i}+1)$, the number of verifiers to be constructed in \cite{xu20tcns} is  $\prod_{i=1}^n(N_{o,i}+1)$.
Therefore, the worst-case complexity to verify delay coobservability in \cite{xu20tcns} is $\mathcal{O}((n+1)\times |Q_H|^{n+1}\times|\Sigma|^{\max\{N_{o,1},\ldots, N_{o,n}\}+1}\times\prod_{i=1}^n(N_{o,i}+1))$, which is polynomial w.r.t. $|Q_H|$ and $|\Sigma|$  but exponential w.r.t. $n$ and $\max\{N_{o,1},\ldots, N_{o,n}\}$. 

\begin{table}
\begin{center}
\caption{Number of states and transitions of our verifiers.}
{\begin{tabular}{ccc} \toprule
Verifier & Number of states & Number of transitions \\ \midrule
$V_{\gamma}^0$ & 1 & 0 \\
$V_{\gamma}^1$ & 3 & 2 \\
$V_{\gamma}^2$ & 47 & 34 \\ \midrule
In total & 51 & 36 \\
 \bottomrule
\end{tabular}}
\end{center}
\label{table1}
\end{table}
\begin{table}
\begin{center}
\caption{Number of states and transitions of verifiers proposed in \cite{xu20tcns}.}
{\begin{tabular}{ccc} \toprule
Verifier & Number of states & Number of transitions \\ \midrule
$V^{(0,0)}$ & 17 & 25 \\
$V^{(0,1)}$ & 21 & 34 \\
$V^{(1,0)}$ & 21 & 30 \\
$V^{(1,1)}$ & 31 & 50 \\
$V^{(2,0)}$ & 25 & 38 \\
$V^{(2,1)}$ & 31 & 52 \\ \midrule
In total & 146 & 229 \\
 \bottomrule
\end{tabular}}
\end{center}
\label{table2}
\end{table}

\textcolor{blue}{Consider again Examples \ref{Ex2}, \ref{Ex3}, and \ref{Ex4}.
We now compare the numbers of states and transitions of  verifiers proposed in this paper with that of verifiers proposed in \cite{xu20tcns}.
To verify delay coobservability, we construct verifiers $V_{\gamma}^0$, $V_{\gamma}^1$, and $V_{\gamma}^2$ in Fig.\ref{Fig10-1}, Fig.\ref{Fig10-2}, and Fig.\ref{Fig10-3}, respectively.
Table I summarizes the numbers of states and transitions of verifiers $V_{\gamma}^0$, $V_{\gamma}^1$, and $V_{\gamma}^2$.
There are 51 states and 36 transitions in total.}

\textcolor{blue}{However, since $N_{o,1}=2$ and $N_{o,2}=1$, the approach proposed in \cite{xu20tcns} constructs $V^{(0,0)}$, $V^{(0,1)}$, $V^{(1,0)}$, $V^{(1,1)}$, $V^{(2,0)}$, and $V^{(2,1)}$. 
Due to space limitations, we will not provide the construction details of these verifiers.
The reader is referred to \cite{xu20tcns} for more information.
Table II gives the numbers of states and transitions of these verifiers.
There are 146 states and 229 transitions in total.
Clearly, our approach is more efficient as it has lower computational complexity.}

\section{\textcolor{blue}{Application in urban traffic control}} \label{Exa}

In this section, we use a practical example to show an application of results derived in this paper.

\subsection{Research object}

\begin{figure}
	\begin{center}
		\includegraphics[width=8.8cm]{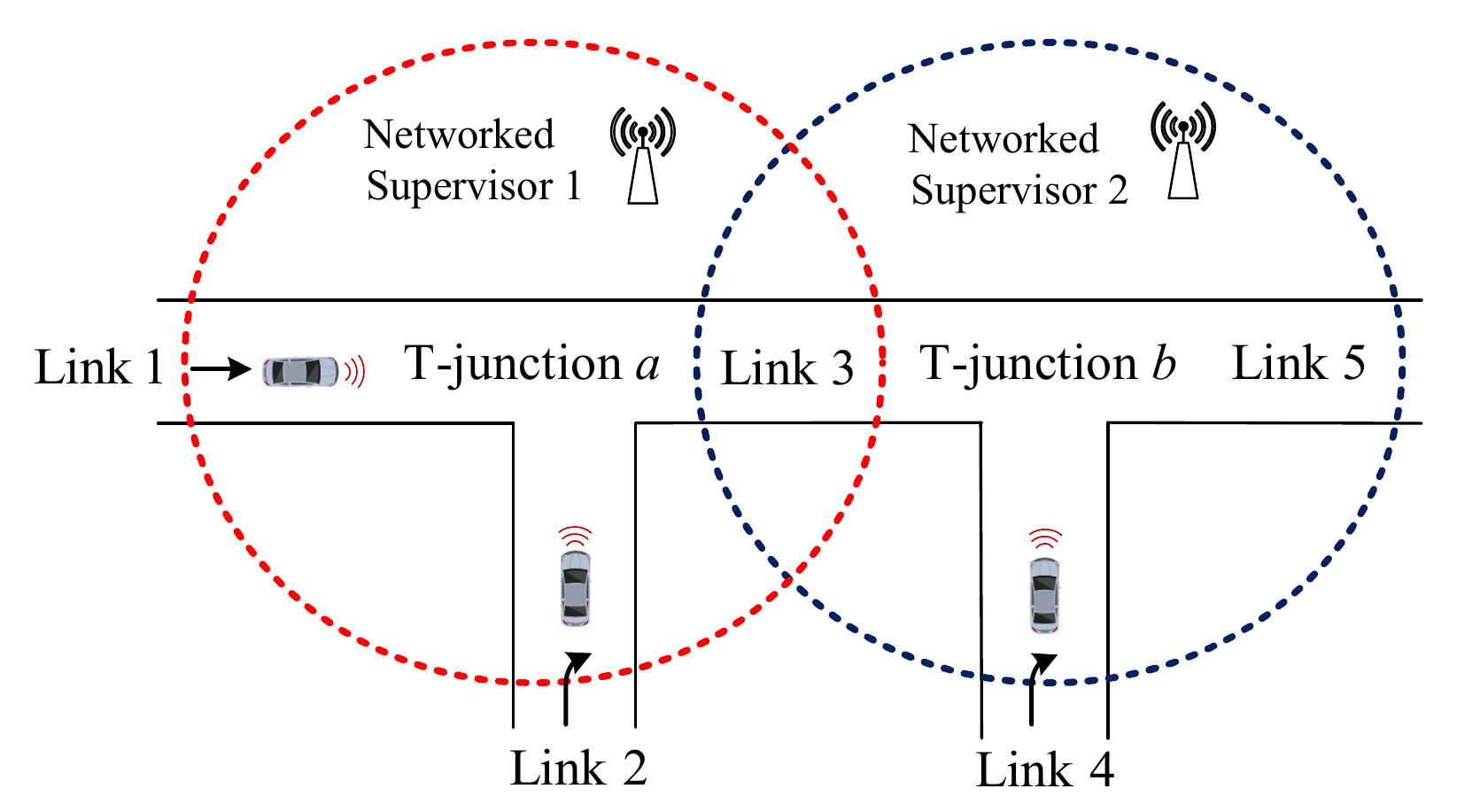}    
		\caption{A small urban traffic network.} 
		\label{Fig4}
	\end{center}
\end{figure}

As shown in Fig.\ref{Fig4}, we consider a small urban traffic network consisting of two unsignalized
T-junctions (T-junctions $a$ and $b$), and five one-lane links (Links $1,\ldots, 5$).
Note that all the links are not in scale and can be longer than represented.
We assume, in this example, (i) a vehicle coming from Link $1$ (Link $2$)  needs to pass through T-junctions $a$ followed by  $b$  to complete a straight movement (a right turn); (ii) a vehicle coming from Link $4$ needs to pass through T-junction $b$ to complete a right turn.
Networked Supervisors 1 and 2 are used to collect signals sent by the vehicles and control their movements.
They are distributed in T-junctions $a$ and $b$, respectively.
To simplify the problem, we assume that Link 3 can accommodate one vehicle at most, i.e., the maximum vehicle queue length for Link 3 is 1.
\subsection{System model}

As shown in Fig.\ref{Fig6}, ${G}_{a}$ and ${G}_{b}$ are system models for T-junctions $a$ and $b$, respectively.
Events in ${G}_a$ and $G_b$ are  defined as follows: $\alpha_1$: a vehicle coming from Link 1 approaches T-junction $a$; $\alpha_2$: a vehicle coming from Link 2 approaches T-junction $a$; $\beta_1$: a vehicle coming from Link 1 passes through T-junction $a$, drives into Link 3, and approaches T-junction $b$; $\beta_2$: a vehicle coming from Link 2 passes through T-junction $a$, drives into Link 3, and approaches T-junction $b$; $\beta_3$: a vehicle coming from Link 4 approaches T-junction $b$; $\gamma_1$: a vehicle coming from Links 1 or 2 passes through T-junction $b$; $\gamma_2$: a vehicle coming from Link 4 passes through T-junction $b$.

\begin{figure}
\centering \subfigure[Automaton
$G_a$]{\label{Fig51}\includegraphics[width=3.6cm]{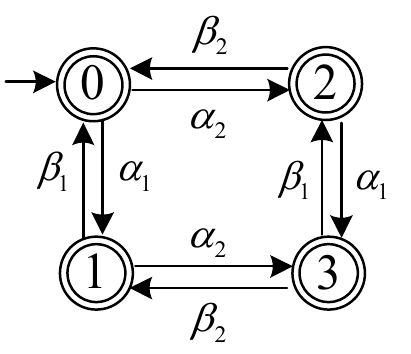}}
\subfigure[Automaton
$G_b$]{\label{Fig52}\includegraphics[width=4.1cm]{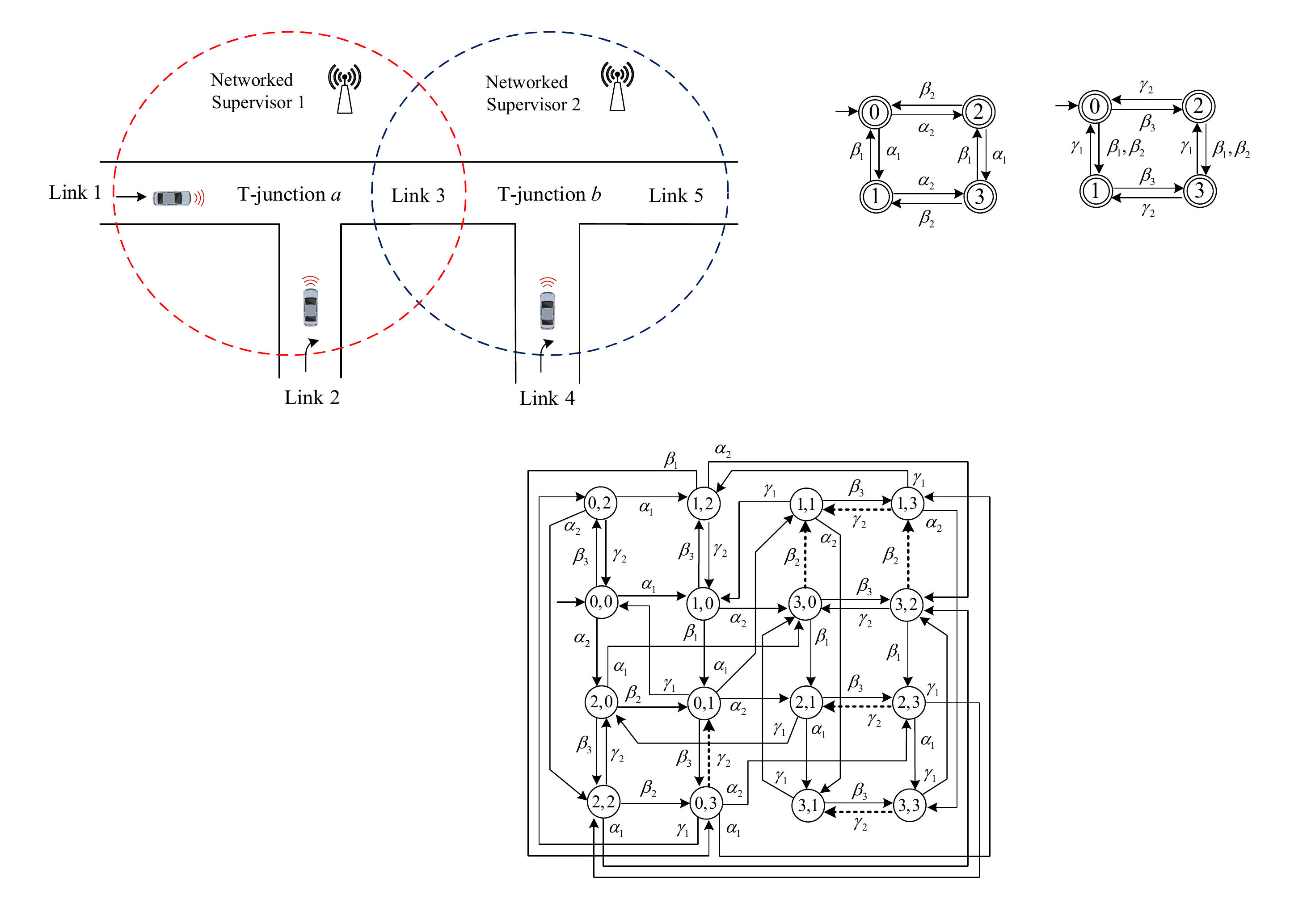}}
\caption{Automata $G_a$ and $G_b$.} \label{Fig6}
\end{figure}

We interpret the construction of $G_a$ as follows.
In the initial state $0$ of $G_a$, vehicles coming from Links 1 or 2 may approach T-junction $a$.
Therefore, both $\alpha_1$ and $\alpha_2$ are defined at state $0$.

When $\alpha_1$ occurs in state $0$,  a vehicle $x$ coming from Link 1 arrives at T-junction $a$, and the system moves to state $1$.
In state 1, vehicle $x$ may pass through T-junction $a$, or another vehicle $y$ coming from Link 2 approaches  T-junction $a$.
Therefore, $\beta_1$ and $\alpha_2$ are defined at state $1$. 
If $\beta_1$ occurs in state 1, vehicle $x$ passes through T-junction $a$, and the system returns to the initial state $0$. 
If $\alpha_2$ occurs in state 1, a new vehicle $y$ coming from Link 2 arrives at T-junction $a$, and the system moves to state $3$.
In state $3$ of $G_a$, both vehicles $x$ and $y$ are waiting to leave T-junction $a$.
Hence, both $\beta_1$ and $\beta_2$ are defined at state $3$.
Upon the occurrence of $\beta_1$ and $\beta_2$ at state $3$, the system moves to states $2$ and $1$, respectively.

When $\alpha_2$ occurs in state $0$, a vehicle $y$ coming from Link 2 arrives at T-junction $a$, and the system moves to state 2.
In state 2, vehicle $y$ may pass through T-junction $a$, or another vehicle coming from Link 1 approaches T-junction $a$.
Hence, both $\beta_2$ and $\alpha_1$ are defined at state $2$.
If $\beta_2$ occurs in state 2, vehicle $y$ leaves T-junction $a$, and the system returns to state $0$. 
If $\alpha_1$ occurs in state 2, a new vehicle $x$ coming from Link 1 approaches T-junction $a$, and the system moves to state $3$ and makes state transitions as mentioned above.

Next, we interpret the construction of $G_b$.
In state 0 of $G_b$, a vehicle coming from Links 1,  2, or  4 may approach T-junction $b$.
Hence, $\beta_1$, $\beta_2$, and $\beta_3$ are defined at state $0$.

When $\beta_1$ or $\beta_2$ occur in state 0, a vehicle $x$ coming from Links 1 or 2  drives into Link 3 and approaches T-junction $b$, and the system moves to state 1.
In state 1 of $G_b$,  vehicle $x$ may pass through T-junction $b$, or vehicle $y$ coming from Link 4 arrives at T-junction $b$.
Correspondingly,  $\gamma_1$ and $\beta_3$ are defined at state 1 in $G_b$.
Upon the occurrence of $\gamma_1$ and $\beta_3$ at state 1, the system moves to states 0 and 3, respectively.
In state 3 of $G_b$, both vehicles $x$ and $y$ can pass through T-junction $b$.
Hence, both $\gamma_1$ and $\gamma_2$ are defined at state $3$  in $G_b$.
If $\gamma_1$ ($\gamma_2$) occurs in state 3, vehicle $x$ (vehicle $y$) leaves T-junction $b$, and the system moves to state 2 (state 1).

When $\beta_3$ occurs in state 0, vehicle $y$ coming from Link 4 approaches T-junction $b$, and the system moves to state 2.
In state 2 of $G_b$, vehicle $y$ may pass through T-junction $b$, or a vehicle coming from Links 1 or 2 arrives at T-junction $b$.
Thus, $\gamma_2$, $\beta_1$, and $\beta_2$  are defined at state 2 in $G_b$.
If $\gamma_2$ occurs in state 2,  vehicle $y$ leaves T-junction $b$, and the system returns to the initial state $0$.
Otherwise, if $\beta_1$ or $\beta_2$ occurs in state 2, a new vehicle $x$ coming from Links 1 or 2 approaches T-junction $b$, and the system moves to state $3$ and makes state transitions as mentioned above.

\begin{figure}
	\begin{center}
		\includegraphics[width=7.2cm]{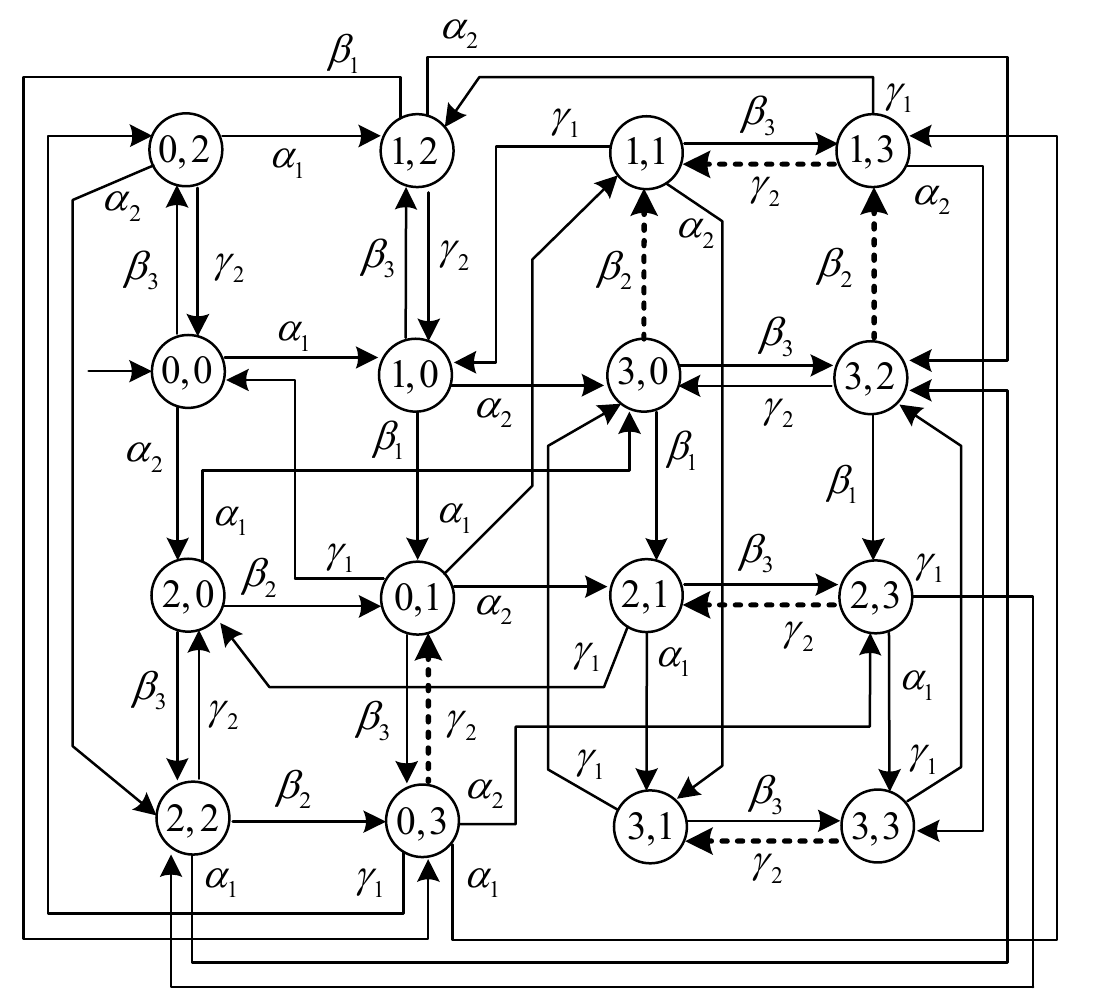}    
		\caption{System model $\tilde{G}$, where all the illegal transitions are highlighted by the dashed line. The desired system $\tilde{H}$ can be obtained from $\tilde{G}$ by deleting all these illegal transitions.} 
		\label{Fig7}
	\end{center}
\end{figure}

As depicted in Fig.\ref{Fig7}, the uncontrolled system $\tilde{G}={G}_a||{G}_b$ models the parallel composition of $G_a$ and $G_b$.
Note that all the states in $G$ are marked states.
In this example, we assume that vehicles driving in the main street have priority in passing through the T-junctions.
That is, if a vehicle coming from Link 1 and a vehicle coming from Link 2 arrive at T-junction $a$ at the same time,  the vehicle coming from Link 2 cannot pass through T-junction $a$ until the vehicle coming from Link 1 leaves T-junction $a$.
Similarly, if a vehicle coming from Links 1 or  2 and a vehicle coming from Link 4 arrive at T-junction $b$ together,  the vehicle coming from Link 4  cannot pass through T-junction $b$ until the  vehicle coming from Links 1 or 2 leaves T-junction $b$.
In  Fig.\ref{Fig7}, all the illegal transitions are highlighted by the dashed line.
As we can see, we disable $\beta_2$ at states (3,0) and (3,2) and disable $\gamma_2$ at states (0,3), (1,3), (2,3), and $(3,3)$. 
The desired system $\tilde{H}\sqsubseteq \tilde{G}$ can be obtained from $\tilde{G}$ by deleting all these illegal transitions.

\subsection{Verification of delay coobservability}

Since Networked Supervisors 1 and  2 are distributed in different T-junctions, we adopt the decentralized control framework as described in Section \ref{Nobs}.
We assume that $\Sigma_{o,1}=\{\alpha_1,\alpha_2,\beta_1,\beta_2\}$ and $\Sigma_{o,2}=\{\beta_1,\beta_2,\beta_3,\gamma_1,\gamma_2\}$.
Meanwhile, we assume that the sets of controllable events for Networked Supervisors 1 and  2 are $\Sigma_{c,1}=\Sigma_{c,2}=\{\beta_1,\beta_2,\gamma_1,\gamma_2\}$.
Since communications between the vehicles and the networked supervisors are achieved over a shared network, communication delays are unavoidable.
We assume that the observation delays for both Networked Supervisors 1 and 2 are 1, i.e., $N_{o,1}=N_{o,2}=1$.
We also assume that there are no control delays for both Networked Supervisors 1 and 2, i.e., $N_{c,1}=N_{c,2}=0$.
By the work of \cite{xu20deds}, there exists a set of networked supervisors for achieving $\mathcal{L}(\tilde{H})$ with certainty under observation delays and control delays iff  $\mathcal{L}(\tilde{H})$ is controllable, $\mathcal{L}_m(\tilde{G})$-closed, and delay coobservable w.r.t. $N_{o,1}+N_{c,1},N_{o,2}+N_{c,2}$, and $\mathcal{L}(\tilde{G})$.
It is not difficult to verify that $\mathcal{L}(\tilde{H})$ is controllable and $\mathcal{L}_m(\tilde{G})$-closed.

For all $s\sigma \in \mathcal{L}(G)$ with $\sigma \in \{\beta_1,\gamma_1\}$, by Fig.\ref{Fig7}, $s\sigma\in \mathcal{L}(H)$.
By Definition \ref{Def1}, to verify if $\mathcal{L}(\tilde{H})$ is delay coobservable, we only need to verify the controllable events $\beta_2,\gamma_2$ regardless of $\beta_1,\gamma_1$.
As pointed out in the previous sections, the verifications for $\beta_2$ and $\gamma_2$ can be done by first constructing verifiers $V^0_{\beta_2}, V^1_{\beta_2}$ and $V^0_{\gamma_2},V^1_{\gamma_2}$ and then checking if there exists a ``bad'' state in these verifiers.
The constructions of these verifiers are similar to those of $V^0_{\gamma}$, $V^1_{\gamma}$, and $V^2_{\gamma}$ in Examples \ref{Ex2} and \ref{Ex3} and are omitted  for brevity.
It can be finally checked that $\mathcal{L}(\tilde{H})$ is delay coobservable w.r.t. $N_{o,1}+N_{c,1},N_{o,2}+N_{c,2}$, and $\mathcal{L}(\tilde{G})$.
Therefore, the control objective $\tilde{H}$ depicted in  Fig.\ref{Fig7}  can be exactly achieved via the decentralized nonblocking networked supervisory control proposed in \cite{xu20deds}.

\section{\textcolor{blue}{Decentralized networked fault diagnosis}}

In this section, we extend our algorithm to verify delay $K$-codiagnodability, which is an important language property arising in the decentralized fault diagnosis of networked DESs.

\subsection{Delay $K$-codiagnosability}

Fault diagnosis is a crucial task in many practical DESs.
It is about determining if a fault event could have occurred or must have occurred in the string executed by the system  \cite{lin94deds, sampath95tac, yin19tac,cao21tcns}.
In many large-scale networked DESs, the information structure of the system is naturally decentralized, as components of the system are distributed.
Decentralized fault diagnosis,  where several diagnosing agents jointly diagnose the plant based on their own observations, is an efficient way of fulfilling fault diagnosis for these large-scale  networked systems \cite{debouk00deds,debouk03deds, qiu06tsmc, qiu08tsmc, yin19tc,yin16ac,kumar09tase,moreira11tac}.
At the same time, in networked systems, undetermined network delays and losses always exist in the data communication between the system and the local diagnosing agents.
Therefore, the (decentralized) fault diagnosis problem with intermittent and permanent loss of observation or network attacks has recently drawn much attention in the DES community \cite{carvalho12ac,carvalho13ac,carvalho17ac,naoki15ijc, takai21ac, wada21deds, marcos22nahs}.
More recently, the effect of observation delays on decentralized fault diagnosis was addressed in \cite{nunes18deds}.

\begin{figure}
	\begin{center}
		\includegraphics[width=7.5cm]{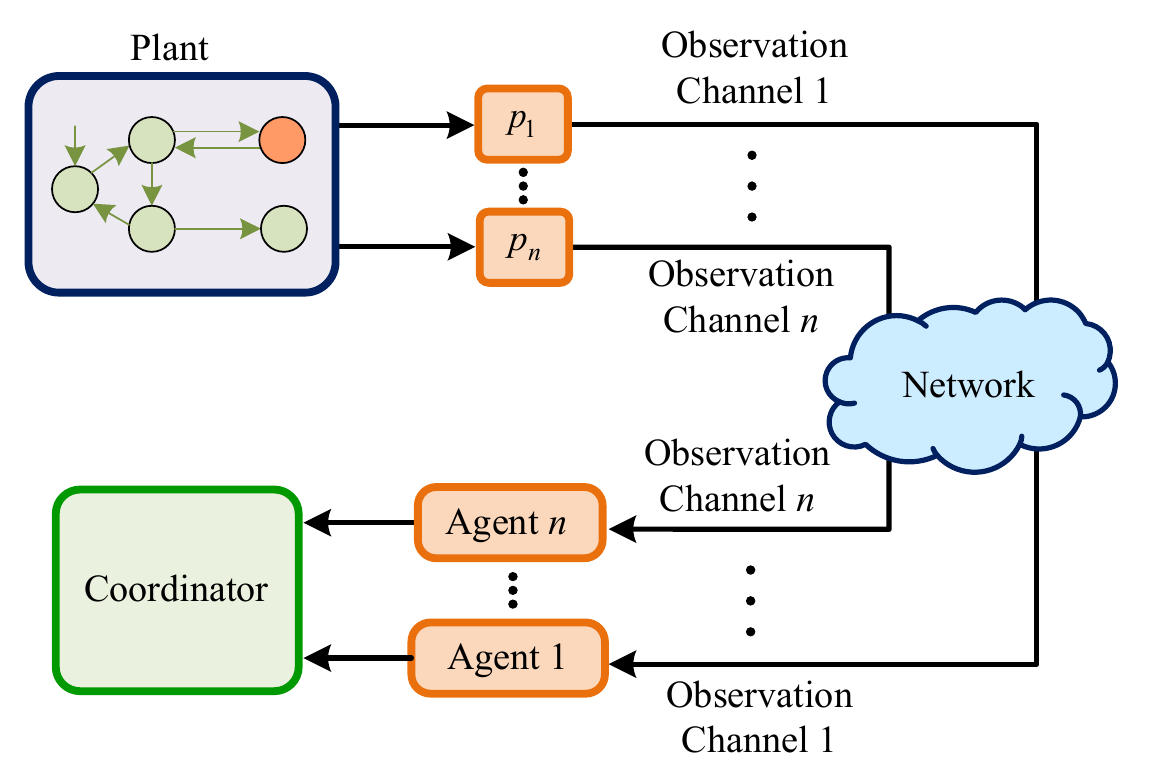}    
		\caption{Decentralized networked fault diagnosis architecture.} 
		\label{Fig9}
	\end{center}
\end{figure}

In this section, we consider the problem of decentralized fault diagnosis under the framework of  networked DESs proposed in \cite{lin2014control,shu14ac, xu20deds}.
Specifically, the adopted decentralized fault diagnosis scheme\footnote{The scheme is similar to Protocol 3 of \cite{debouk00deds} for dealing with the decentralized fault diagnosis problem of non-networked DES.} is depicted in Fig.\ref{Fig9}. 
That is, (i) there is a set of $n$ partial-observation diagnosing agents, each associated with a different projection, jointly diagnosing the system.
(ii) All the local diagnosing agents work independently, i.e., there are no communications among them. 
(iii) The observable event occurrences of the plant are sent to a distributed agent $i\in I$ over an individual observation channel $i$ subject to observation delays.
(iv) The fault event is diagnosed when at least one of the local diagnosing agents identifies its occurrence.
As in \cite{shu14ac, xu20deds}, we assume that
(a) the delays that occur in observation channel $i$ are random but upper bounded by $N_{o,i}$ event occurrences; (b) for each observation channel, delays do not change the order of observations, i.e., FIFO is satisfied.
Thus, for an occurred string $s\in \mathcal{L}(G)$, what the supervisor $i$ may see is $\Theta_i^{N_{o,i}}(s)=\{P_i(s_{-m_i}):m_i\in[0,N_{o,i}]\}$.

\begin{remark}
Some related works are as follows.
In \cite{debouk03deds}, the authors discussed how the communication delays  between the local diagnosing agents and the coordinator can damage the successful fault diagnosis (under Protocols 1 and 2  of \cite{debouk00deds}).
In \cite{qiu08tsmc}, the decentralized fault diagnosis problem was studied under the assumption that (i) the local diagnosing agents could communicate with each other and (ii) communication delays exist among these agents.
Both \cite{debouk03deds} and  \cite{qiu08tsmc} assume that communications between the plant and the local diagnosing agents are reliable.
In practice, however, it is impossible that an event occurrence can be sensed immediately, especially when the local diagnosing agents are far from the plant.
In this paper, we study the decentralized fault diagnosis problem when observation delays exist between the plant and the local diagnosing agents.
\end{remark}

\begin{remark}
We note that the authors in  \cite{nunes18deds} also considered the decentralized fault diagnosis problem under delays between the plant and the local diagnosing agents.
In  \cite{nunes18deds}, it is assumed that (i) each local diagnosing agent observes the system via one or more observation channels (subject to observation delays and losses) and (ii) different observation channels are responsible for different disjoint sets of observable events.
In contrast to  \cite{nunes18deds}, we focus on the case in which each local diagnosing agent communicates with the plant over an individual observation channel.
No single observation channel is uniquely responsible for a set of observable events.
Rather, an observable event occurrence may be sent to several agents over their individual observation channels.
\end{remark}

We denote by $E_f \subseteq \Sigma$  the set of fault events to be diagnosed.
The set of fault events is partitioned into $m$ mutually disjoint sets or fault types: $E_f = E_{f_1}\dot{\cup}\ldots \dot{\cup}E_{f_m}.$
We denote by $\mathcal{F}=\{1,\ldots,m\}$ the index set of the fault types.
Hereafter, ``a fault event of type $f_k$ has occurred" means that a fault event in $E_{f_k}$ has occurred.
Let $\Psi(f_k)=\{t\sigma_f \in \mathcal{L}(G): \sigma_f \in E_{f_k}\}$ be the set of all $s \in \mathcal{L}(G)$ whose last event is a fault event of type $f_k$. 
$\mathcal{L}(G)\setminus s=\{t\in \Sigma^*:st\in\mathcal{L}(G)\}$ denotes the postlanguage of $\mathcal{L}(G)$ after $s$.
We write $f_k\in s$ if $\overline{\{s\}} \cap \Psi(f_k)\neq \emptyset$, i.e., $s$ contains a fault event of type $f_k$. 

We next formalize the notion of delay $K$-codiagnosability.
It is an extension of $K$-codiagnosability\footnote{The formal definition of $K$-codiagnosability  was provided in \cite{debouk00deds,qiu06tsmc}.}  for networked DESs.
A language $L \subseteq \Sigma^*$ is said to be live if whenever $s \in L$, then there exists $\sigma \in \Sigma$ such that $s\sigma \in L$.
We assume that $\mathcal{L}(G)$ is live when delay $K$-codiagnosability is considered.

\begin{definition}
A prefix-closed and live language $\mathcal{L}(G)$ is said to be delay $K$-codiagnosable w.r.t. ${N_{o,1}},\ldots, {N_{o,n}}$, and $E_f$ if the following condition holds:

\begin{align}
&(\forall k \in \mathcal{F})(\forall s \in \Psi(f_k))(\forall t \in \mathcal{L}(G)\setminus s)|t|\ge K  \Rightarrow DCD  \notag\\
&\text{where the delay $K$-codiagnosability condition $DCD$ is}\notag \\
&(\exists i \in I)(\forall u \in \mathcal{L}(G))\Theta_i^{N_{o,i}}(st)\cap \Theta_i^{N_{o,i}}(u) \neq \emptyset \Rightarrow f_k \in u.
\end{align}
\end{definition}

The above definition of delay $K$-codiagnosability states that for any string in the system that contains any type of fault event, at least one local  agent can distinguish that string from strings without that type of fault event within $K$ steps in the presence of observation delays.
Delay $K$-codiagnosability explicitly specifies a uniform detection delay bound for all fault event occurrences.
In other words, delay $K$-codiagnosability requires that any failure must be diagnosed within $K$ steps after its occurrence. 
When $N_{o,i}=0$ for all $i\in I$, delay $K$-codiagnosability reduces to $K$-codiagnosability.

\subsection{The transformation}

When there are no observation delays, it has been shown in \cite{yin15ac} that $K$-codiagnosability is transformable to coobservability.
In this section, we consider the relationship between delay $K$-codiagnosability and delay coobservability.
We show that delay $K$-codiagnosability is also transformable to delay coobservability.
Given automaton $G$, Algorithm 1 constructs two automata $\tilde{G}=(\tilde{Q},\tilde{\Sigma},\tilde{\delta},\tilde{q}_0)$ and $\tilde{H}=(\tilde{Q}_H, \tilde{\Sigma},\tilde{\delta}_H,\tilde{q}_{0,H})$ with $\tilde{H} \sqsubseteq \tilde{G}$ such that the delay $K$-codiagnosability of $G$ is equivalent to the delay coobservability of $\tilde{H}$ w.r.t.  $\tilde{G}$.

\begin{algorithm}[htb]

	\LinesNumbered
		\KwIn{$G=(Q,\Sigma,\delta,q_0)$, $E_f$, and $\mathcal{F}=\{1,\ldots,m\}$.}
		\KwOut{$\tilde{G}=(\tilde{Q},\tilde{\Sigma},\tilde{\delta},\tilde{q}_0)$ and $\tilde{H}=(\tilde{Q}_H, \tilde{\Sigma},\tilde{\delta}_H,\tilde{q}_{0,H})$.} 
          For each  $k \in \mathcal{F}$, build $H_k=(Q_k,\Sigma_k,\delta_k,q_{0,k})$, where $Q_k=Q \times\{-1,0,1,\ldots,K\}$ is the set of states; $\Sigma_k=\Sigma$ is the set of events; $q_{0,k}=(q_0,-1)$ is the initial state, and $\delta_k:Q_k\times \Sigma_k \rightarrow Q_k$ is the transition function where for any $x=(q,n)\in Q_k$ and any $\sigma \in \Sigma_k$, we have
\lfteqn
\delta_k(x,\sigma)=
\begin{cases}
	(\delta(x,\sigma),-1) & \mathrm{if}\ n=-1 \wedge \sigma \in \Sigma \setminus E_{f_k} \\
	(\delta(x,\sigma),n+1) & \mathrm{if}\ [0 \le n <K]  \lor \\&[n=-1 \wedge \sigma \in E_{f_k}]\\
	(\delta(x,\sigma),K) & \mathrm{if}\ n=K;
\end{cases}
\ndeqn
\\
 Set $\hat{H}\leftarrow H_1||H_2||\cdots||H_m$\;
Set $\tilde{H} \leftarrow \hat{H}$, $\tilde{Q}_H\leftarrow \tilde{Q}_H \cup \{legal\}$, and $\tilde{\Sigma}_H=\Sigma \cup \Sigma'$ where $\Sigma'=\{\sigma_k: k \in \mathcal{F}\}$\;
For all $\tilde{q}=(x_1,\ldots, x_m)\in \tilde{Q}_H$ and all $k\in \mathcal{F}$, if $x_k=(q,n)$ with $n=-1$, set $\tilde{\delta}_H(\tilde{q},\sigma_k)=legal$\;
Set $\tilde{G} \leftarrow \tilde{H}$ and $\tilde{Q} \leftarrow \tilde{Q} \cup \{illegal\}$\;
For all $\tilde{q}=(x_1,\ldots, x_m)\in \tilde{Q}$ and all $k\in \mathcal{F}$, if $x_k=(q,n)$ with $n=K$, set $\tilde{\delta}(\tilde{q},\sigma_k)=illegal$\;
		\vspace{.2cm} 		
		\caption{{\sc Transformation} \label{O2D}} 
	\end{algorithm}

We consider observations of the outputs $\tilde{H}$ and $\tilde{G}$.  
For all $i\in I$, let the set of controllable events be $\tilde{\Sigma}_{c,i}=\Sigma_{c,i} \cup \Sigma'$, where $\Sigma'=\{\sigma_k: k \in \mathcal{F}\}$, and the set of observable events be $\tilde{\Sigma}_{o,i}=\Sigma_{o,i}$. 
The natural projection $\tilde{P}_i:\mathcal{L}(\tilde{G})\rightarrow \tilde{\Sigma}_o^*$ is defined for $\tilde{G}$ as $\tilde{P}_i(\varepsilon)=\varepsilon$, and for all $s,s\sigma \in \mathcal{L}(\tilde{G})$, $\tilde{P}_i(s\sigma)=\tilde{P}_i(s)\sigma$ if $\sigma \in \tilde{\Sigma}_{o,i}$, and  $\tilde{P}_i(s\sigma)=\tilde{P}_i(s)$ if $\sigma \in \tilde{\Sigma}\setminus \tilde{\Sigma}_{o,i}$.
We define the observation mapping $\tilde{\Theta}^{N_{o,i}}_{i}:\mathcal{L}(\tilde{G})\rightarrow 2^{\tilde{\Sigma}_o^*}$  for $\tilde{G}$ in the same way as we define $\Theta_{i}^{N_{o,i}}$ for $G$ in Section \ref{Nobs}.

Algorithm 1 is similar to the algorithms proposed in \cite{yin15ac} for transforming  $K$-codiagnosability  to coobservability.
In Line 1, $H_k$ refines the structure of $G$ such that each state of $H_k$ records the number of event occurrences since the occurrence of a fault event of type $f_k$.
In Line 4, we connect $\tilde{q}$ and $legal$ using $\sigma_k$ if strings ending up at $\tilde{q}$ do not contain any fault event of type $f_k$.
In Line 6,  we connect $\tilde{q}$ and $illegal$ using $\sigma_k$ if a fault event of type $f_k$ has occurred $K$ steps earlier.
More construction details on $\tilde{H}$ and $\tilde{G}$ can be found in \cite{yin15ac}.



Assume that $\tilde{G}$ is the uncontrolled system and $\tilde{H}$ is the desired system. 
If there exists $s,s_1,\ldots,s_n \in \mathcal{L}(\tilde{H})$ causing a violation of  delay coobservable w.r.t.  $\mathcal{L}(\tilde{G})$, then $s,s_1,\ldots,s_n$ causes a violation of  delay $K$-codiagnosability, and vice versa, which leads to the following theorem.

\begin{theorem}\label{Theo3}
$\mathcal{L}(G)$ is delay $K$-codiagnosable w.r.t. $N_{o,1},\ldots$, $N_{o,n}$, and $E_f$, iff  $\mathcal{L}(\tilde{H})$ is delay coobservable w.r.t. ${N_{o,1}},\ldots$, $N_{o,n}$, and  $\mathcal{L}(\tilde{G})$.
\end{theorem}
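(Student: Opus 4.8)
The plan is to prove the equivalence by building an explicit dictionary between the witnesses of the two \emph{negated} properties, so that $\mathcal{L}(G)$ fails delay $K$-codiagnosability exactly when $\mathcal{L}(\tilde{H})$ fails delay coobservability w.r.t. $\mathcal{L}(\tilde{G})$. First I would record the structural facts that follow directly from Algorithm \ref{O2D}. Each $H_k$ only augments the states of $G$ with a counter $n\in\{-1,0,\ldots,K\}$ that is a deterministic function of the executed $\Sigma$-string, so the parallel composition $\hat{H}=H_1\|\cdots\|H_m$ satisfies $\mathcal{L}(\hat{H})=\mathcal{L}(G)$ up to this relabeling, and $\tilde{G}$ and $\tilde{H}$ share exactly the same $\Sigma$-transitions (both inherited from $\hat{H}$). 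The only new events are the $\sigma_k\in\Sigma'$, which are controllable for every agent (hence $I^c(\sigma_k)=I$) and unobservable to all agents (since $\tilde{\Sigma}_{o,i}=\Sigma_{o,i}$); consequently $\tilde{P}_i$ and $\tilde{\Theta}_i^{N_{o,i}}$ restricted to $\Sigma$-strings coincide with $P_i$ and $\Theta_i^{N_{o,i}}$. Because $legal$ and $illegal$ are sink states, any $\tilde{s}$ that can be extended by a $\sigma_k$ is a pure $\Sigma$-string, identified with some $st\in\mathcal{L}(G)$.

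Second, I would translate each clause of the delay coobservability condition (Definition \ref{Def1}) applied to $\tilde{H}$ into a statement about $G$. Since the $\Sigma$-transitions of $\tilde{G}$ and $\tilde{H}$ agree, the premise $\tilde{s}\sigma\in\mathcal{L}(\tilde{G})\setminus\mathcal{L}(\tilde{H})$ can hold only for $\sigma=\sigma_k$; by Lines 4 and 6 this occurs iff the state reached by $\tilde{s}$ carries $x_k=(q,K)$, i.e., (using that the counter records $\min(K,\text{steps since the fault})$) iff $\tilde{s}$ corresponds to a string $st$ with $s\in\Psi(f_k)$, $t\in\mathcal{L}(G)\setminus s$ and $|t|\ge K$. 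Symmetrically, $\tilde{u}\sigma_k\in\mathcal{L}(\tilde{H})$ holds iff the state reached by $\tilde{u}$ carries $x_k=(q,-1)$, i.e., iff the corresponding string $u$ satisfies $f_k\notin u$. Finally, $\tilde{u}\in(\tilde{\Theta}_i^{N_{o,i}})^{-1}(\tilde{\Theta}_i^{N_{o,i}}(\tilde{s}))$ is equivalent to $\tilde{\Theta}_i^{N_{o,i}}(\tilde{u})\cap\tilde{\Theta}_i^{N_{o,i}}(\tilde{s})\neq\emptyset$, and these maps agree with $\Theta_i^{N_{o,i}}$ on $\Sigma$-strings, so the set condition $(\tilde{\Theta}_i^{N_{o,i}})^{-1}(\tilde{\Theta}_i^{N_{o,i}}(\tilde{s}))\sigma_k\cap\mathcal{L}(\tilde{H})\neq\emptyset$ translates precisely into the existence of $u\in\mathcal{L}(G)$ with $\Theta_i^{N_{o,i}}(st)\cap\Theta_i^{N_{o,i}}(u)\neq\emptyset$ and $f_k\notin u$.

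Third, I would chain these translations through the quantifiers. Negating delay coobservability of $\tilde{H}$ produces: there exist $k\in\mathcal{F}$ and $\tilde{s}=st$ (with $s\in\Psi(f_k)$, $|t|\ge K$) such that for every $i\in I$ there is a confusable $u$ with $f_k\notin u$. By the dictionary of the second step this is, clause for clause, the negation of the condition $DCD$ in the definition of delay $K$-codiagnosability. Running the same equivalences in the reverse direction closes the $\Leftrightarrow$. I would present the whole argument as an equivalence of the two negations, which keeps the alternation of quantifiers ($\exists k,s,t$; $\forall i$; $\exists u$) transparent and avoids reproving each direction from scratch.

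I expect the main obstacle to be the bookkeeping of the second step rather than any conceptual difficulty. One must verify carefully that the deterministic counter in $\hat{H}$ indeed equals $\min(K,\text{steps since the fault})$, so that the saturated value $n=K$ matches the premise $|t|\ge K$ exactly, and that no original event $\sigma\in\Sigma_c$ can ever witness a coobservability violation because its $\tilde{G}$- and $\tilde{H}$-transitions coincide and hence the premise $\tilde{s}\sigma\in\mathcal{L}(\tilde{G})\setminus\mathcal{L}(\tilde{H})$ fails. Once the unobservability of the $\sigma_k$ and the identity of the $\Sigma$-transitions of $\tilde{G}$ and $\tilde{H}$ are pinned down, the equality of the observation maps and thus of the confusability conditions is routine.
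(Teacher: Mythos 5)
Your proposal is correct and follows essentially the same route as the paper's own proof: both arguments hinge on the observations that only the new events $\sigma_k\in\Sigma'$ can witness a coobservability violation (since $\tilde{G}$ and $\tilde{H}$ agree on all $\Sigma$-transitions), that $\tilde{s}\sigma_k\in\mathcal{L}(\tilde{G})\setminus\mathcal{L}(\tilde{H})$ encodes ``a fault of type $f_k$ occurred at least $K$ steps ago'' while $\tilde{u}\sigma_k\in\mathcal{L}(\tilde{H})$ encodes $f_k\notin u$, and that $\tilde{\Theta}_i^{N_{o,i}}$ and $\Theta_i^{N_{o,i}}$ coincide on $\Sigma$-strings. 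The paper merely packages this witness dictionary as two proofs by contradiction rather than as a single chain of equivalences between the negated properties, which is a presentational rather than a substantive difference.
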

\begin{proof}
($\Rightarrow$)
The proof is by contradiction. 
Assume that $\mathcal{L}(\tilde{H})$ is not delay coobservable  w.r.t. ${N_{o,1}},\ldots, N_{o,n}$, and  $\mathcal{L}(\tilde{G})$.
Note that, for any $\sigma \in \Sigma$ and any $s\in \mathcal{L}(\tilde{G})$, $s\sigma\in \mathcal{L}(\tilde{G})\Leftrightarrow s\sigma \in \mathcal{L}(\tilde{H})$, which implies that (\ref{Eq1}) always holds for $\sigma\in \Sigma$ regardless of $\Theta_i^{N_{o,i}}$.
Therefore, that $\mathcal{L}(\tilde{H})$ is not delay coobservable implies
$(\exists s \in \mathcal{L}(\tilde{H}))(\exists \sigma_k \in \Sigma') [s\sigma_k\in \mathcal{L}(\tilde{G}) \setminus \mathcal{L}(\tilde{H})] \wedge [(\forall i \in I)(\tilde{\Theta}_i^{N_{o,i}})^{-1}(\tilde{\Theta}_i^{N_{o,i}}(s))\sigma_k \cap \mathcal{L}(\tilde{H})\neq\emptyset].
$
Since $s\sigma_k\in \mathcal{L}(\tilde{G}) \setminus \mathcal{L}(\tilde{H})$, by Lines 4 and 6, we can write $s=s_1s_2$ such that $s_1\in \Psi(f_k)$ and $|s_2|\ge K$.
Meanwhile, since  $(\tilde{\Theta}_i^{N_{o,i}})^{-1}(\tilde{\Theta}_i^{N_{o,i}}(s))\sigma_k \cap \mathcal{L}(\tilde{H})\neq\emptyset$, by definition, there exists $t \in \mathcal{L}(\tilde{H})$ such that $\tilde{\Theta}_i^{N_{o,i}}(s) \cap \tilde{\Theta}_i^{N_{o,i}}(t) \neq \emptyset$ and $t \sigma_k \in \mathcal{L}(\tilde{H})$.
By Line 4, $f_k \notin t$.
Note that strings $s$ and $t$ are also in $\mathcal{L}(G)$. 
By the definitions of $\tilde{\Theta}_i^{N_{o,i}}$ and $\Theta_i^{N_{o,i}}$, we have $\tilde{\Theta}_i^{N_{o,i}}(s)={\Theta}_i^{N_{o,i}}(s)$ and $\tilde{\Theta}_i^{N_{o,i}}(t)={\Theta}_i^{N_{o,i}}(t)$.
Hence, ${\Theta}_i^{N_{o,i}}(s) \cap {\Theta}_i^{N_{o,i}}(t) \neq \emptyset$.
Overall,
$(\exists s_1 \in \Psi(f_k))(\exists s_2 \in \mathcal{L}(G)\setminus s_1: |s_2|\ge K)(\forall i \in I)(\exists t \in \mathcal{L}(G)){\Theta}_i^{N_{o,i}}(t) \cap {\Theta}_i^{N_{o,i}}(s_1s_2) \neq \emptyset \wedge f_k \notin t,
$
which contradicts that $\mathcal{L}(G)$ is delay $K$-codiagnosable.

($\Leftarrow$) Also by contradiction. 
Suppose that $\mathcal{L}(G)$ is not delay $K$-codiagnosable w.r.t. ${N_{o,1}},\ldots, N_{o,n}$, and $E_f$.
This implies that
$(\exists f_k \in \mathcal{F})(\exists s \in \Psi(f_k))(\exists t \in \mathcal{L}(G)\setminus s: |t|\ge K) (\forall i \in I)(\exists u \in \mathcal{L}(G)){\Theta}_i^{N_{o,i}}(u) \cap {\Theta}_i^{N_{o,i}}(st) \neq \emptyset \wedge f_k \notin u.
$
Since $s \in \Psi(f_k)$, $|t|\ge K$, and $f_k \notin u$, by Line 1, $\delta_k(q_{0,k},st)=(\delta(q_0,st),K)$ and $\delta_k(q_{0,k},u)=(\delta(q_0,u),-1)$.
By Lines 4 and 6, $st \sigma_k\in \mathcal{L}(\tilde{G})\setminus \mathcal{L}(\tilde{H})$ and $u \sigma_k \in \mathcal{L}(\tilde{H})$.
Moreover, since ${\Theta}_i^{N_{o,i}}(u) \cap {\Theta}_i^{N_{o,i}}(st) \neq \emptyset$, $(\tilde{\Theta}_i^{N_{o,i}})^{-1}(\tilde{\Theta}_i^{N_{o,i}}(s))\sigma_k \cap \mathcal{L}(\tilde{H})\neq\emptyset$.
Overall, we have
$(\exists st \in \mathcal{L}(\tilde{H}))(\exists \sigma_k \in \tilde{\Sigma}) [s\sigma_k\in \mathcal{L}(\tilde{G}) \setminus \mathcal{L}(\tilde{H})] \wedge [(\forall i \in I)(\tilde{\Theta}_i^{N_{o,i}})^{-1}(\tilde{\Theta}_i^{N_{o,i}}(st))\sigma_k \cap \mathcal{L}(\tilde{H})\neq\emptyset],
$
which contradicts that $\mathcal{L}(\tilde{H})$ is delay coobservable.
\end{proof}

\begin{remark}
By Theorem \ref{Theo3}, we can verify if $\mathcal{L}(G)$ is delay $K$-codiagnosable by verifying if $\mathcal{L}(\tilde{H})$ is delay coobservable w.r.t. $\mathcal{L}(\tilde{G})$ using the procedures developed in Section III.
By Lines 1 and 2, $\hat{H}$ has at most $K^m\times|Q|$ states as discussed in \cite{yin15ac}.
By Line 3, the number of states of $\tilde{H}$ is upper bounded by $K^m\times|Q|+1$, and the number of events of $\tilde{H}$ is upper bounded by $|\Sigma|+m$.
Therefore, the worst-case complexity of Algorithm 1 is $\mathcal{O}(K^{m} \times |{Q}| \times |{\Sigma}|)$.
\end{remark}

\begin{remark}
The transformation that we present exploits the fact that problems concerning decentralized state disambiguation (under the framework of networked DESs proposed in \cite{lin2014control,shu14ac, xu20deds}) could be reduced to the delay coobservability problem.
This fact makes our approach for the verification of delay coobservability very useful in other decentralized state disambiguation problems (not confined to just the decentralized fault diagnosis problem), such as, decentralized fault prognosis problem and decentralized detectability problem, in the networked DESs.
\end{remark}

\section{Conclusion} \label{Concl}

In this paper, we revisit the verification of delay coobservability.
A new algorithm for the verification of delay coobservability is proposed.
The algorithm partitions the specification language into different sets of languages.
For each of the sets, a verifier is constructed to check if there exists a string in it causing a violation of delay coobservability.
The complexity of the proposed approach is $\mathcal{O}((n+1) \times |Q_H|^{n+1}\times |\Sigma|^2  \times \sum_{k=0}^{\max\{N_{o,1},\ldots,N_{o,n}\}} (k+1))$ compared with  $\mathcal{O}((n+1)\times |Q_H|^{n+1}\times|\Sigma|^{\max\{N_{o,1},\ldots,N_{o,n}\}+1} \times\prod_{i=1}^n(N_{o,i}+1))$ in \cite{xu20tcns}.
We further consider the decentralized fault diagnosis problem of networked DESs, and introduce the notion of delay $K$-codiagnosability.
We show that delay $K$-codiagnosability is transformable to delay coobservability.
Thus, the procedures developed in this paper for the verification of delay coobservability can also be used to verify delay $K$-codiagnosability.

\appendix

\subsection{Proof of Claim 1}

\begin{proof}
The proof is by induction on $s^d$ for $d=0,1,\ldots,k$.

The base case is for $s^d$ with $d=0$.
Let $t_i$ be the longest prefix of $s_i$ with $P_i(t_i)=P_i(s^0)=\varepsilon$ for all $i\in I^c(\sigma)$.
We write $t_i=\sigma_i^1 \ldots \sigma_i^{k_i}$ for some $\sigma_i^j \in \Sigma_{uo,i}$, $j=1,\ldots,k_i$.
We also write $q_i^j=\delta_H(q_0,t_i^j)$ for $j=0,1,\ldots, k_i$.
By recursively applying (\ref{Eq5-2}), $f_{\sigma}^k(x_{\sigma,0}^k,e_1^1\cdots e_1^{k_1}\cdots e_l^1\cdots e_l^{k_l})=(q_0,q_1^{k_1},\ldots, q_l^{k_l},0)$, where $e_i^j=(\varepsilon,\ldots, \varepsilon, \underset{(i+1)^{st}}{\sigma_{i}^j}, \varepsilon, \ldots,\varepsilon)$ for $i=1,\ldots,l$ and $j=0,1,\ldots,k_i$.

For all $i \in  I^c(\sigma)$ with  $d \ge d_i$, since $d=0$ and $P_i(s_i)=P_i(s^{d_i})$, we have $P_i(s_i)=P_i(s^0)=\varepsilon$.
Moreover, since $t_i$ is the longest prefix of $s_i$ with $P_i(t_i)=P_i(s^d)=\varepsilon$,   $t_i=s_i$.
On the other hand, for all $i \in  I^c(\sigma)$ with  $d< d_i$, we know $q_i^{k_i}=\delta_H(q_0,t_i)$, where $t_i$ is the longest prefix of $s_i$ with $P_i(t_i)=P_i(s^d)$.
Therefore, there exists $(q_0,q_1^{k_1},\ldots, q_l^{k_l},0) \in X^k_{\sigma}$ such that $q_0=\delta_H(q_0,s^d)$ and for all $i \in  I^c(\sigma)$, if $d \ge d_i$, $q_i^{k_i}=\delta_H(q_0,s_i)$, and if $d < d_i$, $q_i^{k_i}=\delta_H(q_0,t_i)$, where $t_i$ is the longest prefix of $s_i$ with $P_i(t_i)=P_i(s^d)$.
The base case is true.

The induction hypothesis is that for all $s^d$ with $d < k$, there exists $(q,q_1,\ldots,q_l,d)\in X^k_{\sigma}$ such that $q=\delta_H(q_0,s^d)$ and for all $i \in  I^c(\sigma)$, if $d \ge d_i$, $q_i=\delta_H(q_0,s_i)$, and if $d < d_i$, $q_i=\delta_H(q_0,t_i)$, where $t_i$ is the longest prefix of $s_i$ with $P_i(t_i)=P_i(s^d)$.
We next prove that the same is also true for  $s^{d+1}=s^d e$.

Let us first consider all the $i \in I^c(\sigma)$ with $d < d_i$.
By the induction hypothesis, $q_i=\delta_H(q_0,t_i)$, where $t_i$ is the longest prefix of $s_i$ with $P_i(t_i)=P_i(s^{d})$.
Since $d< d_i$, $d+1 \le d_i$.
Hence, $s^{d+1}=s^de \in \overline{\{s^{d_i}\}}$.
If $e \in \Sigma_{o,i}$, since $t_i$ is the longest prefix of $s_i$ with $P_i(t_i)=P_i(s^{d})$, we have $t_ie \in \overline{\{s_i\}}$, because otherwise it contradicts $P_i(s_i)=P_i(s^{d_i})$.
By $t_i e \in \overline{\{s_i\}}$, we know $|t_i|<|s_i|$.
Since  $q_i=\delta_H(q_0,t_i)$ and $s_i$ is the shortest string in $\mathcal{L}(H)$ with $P_i(s_i) \in \Theta_i^{N_{o,i}}(s)$ and $s_i\sigma \in \mathcal{L}(H_{N_{o,i}}^{aug})$, $\mathrm{C}_5$ is not satisfied in state $(q,q_1,\ldots,q_l,d)$ for supervisor $i$.
Thus, by $e \in \Sigma_{o,i}$,  $\mathrm{C}_1$ or $\mathrm{C}_3$ may be satisfied in state $(q,q_1,\ldots,q_l,d)$ for supervisor $i$.
On the other hand, if $e \in \Sigma_{uo,i}$, the supervisor $i$ may satisfy $\mathrm{C}_2$ or $\mathrm{C}_4$ or $\mathrm{C}_5$ in state $(q,q_1,\ldots,q_l,d)$.

Let us further consider all the $i \in I^c(\sigma)$ with $d \ge d_i$.
Since $q_i=\delta_H(q_0,s_i)$ and  $s_i\sigma \in  \mathcal{L}(H_{N_{o,i}}^{aug})$, we have $\sigma \in \Gamma_{H,N_{o,i}}^{aug}(q_i)$. 
Since $d_i \in \{\max\{0, k-N_{o,i}\},\ldots, k\}$ and $d \ge d_i$, $k-d \le N_{o,i}$.
Thus, $\textup{C}_5$ is satisfied in state $(q,q_1,\ldots,q_l,d)$ for all $i \in I^c(\sigma)$ with $d \ge d_i$.

Therefore, by (\ref{Eq5-1}), a first type of transition is defined at $(q,q_1,\ldots,q_l,d)$ as:
\begin{align*}
&f_{\sigma}^k((q,q_1,\ldots,q_l,d),(e,e_1,\ldots,e_l))=\\
&(\delta_H(q,e),\delta_H(q_1,e_1),\ldots,\delta_H(q_l,e_l),d+1)=\\
&(q',q'_1,\ldots,q'_l,d+1),
\end{align*}
where for all $i \in I^c(\sigma)$, (i) if $d<d_i \wedge e \in \Sigma_{o,i}$, then $e_i=e$, (ii) if $d<d_i \wedge e \in \Sigma_{uo,i}$, then $e_i=\varepsilon$, and (iii) if $d \ge d_i$, then $e_i=\varepsilon$.

Let $\mathcal{A}=\{m_1,\ldots,m_h\}\subseteq I^c(\sigma)$ be the set of supervisors in $I^c(\sigma)$ such that $d<d_{m_z} \wedge e \in \Sigma_{o,m_z}$ for $z=1,\ldots,h$.
For all $i \in \mathcal{A}$, since $t_i e \in \overline{\{s_i\}}$ and $P_i(t_i e)=P_i(s^d e)$, let $t_i e v_i$ be the longest prefix of $s_i$ with $P_i(t_i e v_i)=P_i(s^d e)$.
We write $v_i=\sigma_i^1 \cdots \sigma_i^{k_i}$ for some $\sigma_i^j \in \Sigma_{uo,i}$, $j=1,\ldots,k_i$.
We also write $q_i^j=\delta_H(q_0,t_i e v_i^j)$ for $j=0,1,\ldots, k_i$.
By recursively applying (\ref{Eq5-2}),  
\begin{align*}
&f_{\sigma}^k((q',q'_1,\ldots,q'_l,d+1),e_{m_1}^1\cdots e_{m_1}^{k_{m_1}}\cdots e_{m_h}^1\cdots e_{m_h}^{k_{m_h}})=\\
&(q',q_1'',\ldots, q_l'',d+1),
\end{align*}
where (i) $e_{m_z}^j=(\varepsilon,\ldots, \varepsilon, \underset{(m_z+1)^{st}}{\sigma_{m_z}^j}, \varepsilon, \ldots,\varepsilon)$ for all $m_z \in \mathcal{A}$ and all $j=1,\ldots,k_{m_z}$, and (ii) $q_{i}''=q_{i}^{k_{i}}$ if $i \in \mathcal{A}$, and $q_i''=q_i'$ if $i \in I^c(\sigma)\setminus \mathcal{A}$.

Then, we have the following results.
1) For all  $i \in I^c(\sigma)$ satisfying $d<d_i  \wedge e \in \Sigma_{o,i}$.
We know $q_i''=q_i^{k_i}=\delta_H(q_0,t_i e v_i)$.
Since $d < d_i$, we have $d+1=d_i$ or $d+1<d_i$.
If $d+1=d_i$, since $t_i e v_i$ is the longest prefix of $s_i$ with $P_i(t_i e v_i)=P_i(s^d e)=P_i(s^{d_i})$, we have  $t_i e v_i=s_i$.
Otherwise, if $d+1<d_i$, $t_i e v_i$ is the longest prefix of $s_i$ with $P_i(t_i e v_i)=P_i(s^d e)=P_i(s^{d+1})$.
2) For all $i \in I^c(\sigma)$ satisfying $d<d_i  \wedge e \in \Sigma_{uo,i}$,  we know $q_i''=q'_i=\delta_H(q_0,t_i)$.
If $d+1=d_i$, since $t_i$ is the longest prefix of $s_i$ with $P_i(t_i)=P_i(s^d)=P_i(s^d e)=P_i(s^{d_i})$, we have $t_i=s_i$.
Otherwise, if $d+1<d_i$, $t_i$ is the longest prefix of $s_i$ with $P_i(t_i)=P_i(s^d)=P_i(s^d e)=P_i(s^{d+1})$.
3) For all $i \in I^c(\sigma)$ satisfying $d \ge d_i$,  we have $d+1 \ge d_i$ and $q_i''=q'_i=\delta_H(q_0,s_i)$.

Therefore, there exists $(q',q_1'',\ldots, q_l'',d+1) \in X^k_{\sigma}$ such that $q'=\delta_H(q_0,s^d e)=\delta_H(q_0,s^{d+1})$ and for all $i  \in I^c(\sigma)$, if $d +1 \ge d_i$, $q''_i=\delta_H(q_0,s_i)$ and if $d+1< d_i$,  $q''_i=\delta_H(q_0,t_i')$, where $t_i'$ is the longest prefix of $s_i$ with $P_i(t_i')=P_i(s^{d+1})$.
That completes the proof.
\end{proof}

\subsection{Proof of Claim 2}

\begin{proof}
The proof is by induction on $s_{-N}w^d$, $d=0,1,\ldots,N$.

The base case is for $s_{-N}w^d$ with $d=0$.
For all $i\in I^c(\sigma)$, since $P_i(u_iw_i)=P_i(s_{-N}w^{d_i})$ and $P_i(u_i)=P_i(s_{-N})$, let $v_i$ be the longest prefix of $w_i$ with $P_i(u_iv_i)=P_i(s_{-N})$. Clearly, $v_i\in \Sigma_{uo,i}^*$.
We write $v_i=\sigma_i^1 \ldots \sigma_i^{k_i}$ for $\sigma_i^j \in \Sigma_{uo,i}$, $j=1,\ldots,k_i$.
We also write $q=\delta_H(q_0,s_{-N})$ and $q_i^j=\delta_H(q_0,u_iv_i^j)$ for $j=0,1,\ldots, k_i$.
By definition, $(q,q_1^0,\ldots, q_l^0) \in x_{\sigma,0}^N$.
By recursively applying (\ref{Eq8}), $f_{\sigma}^N((q,q_1^0,\ldots, q_l^0),e_1^1\cdots e_1^{k_1}\cdots e_l^1\cdots e_l^{k_l})=(q,q_1^{k_1},\ldots, q_l^{k_l},0)$, where $e_i^j=(\varepsilon,\ldots, \varepsilon, \underset{(i+1)^{st}}{\sigma_{i}^j}, \varepsilon, \ldots,\varepsilon)$ for $i=1,\ldots,l$ and $j=0,1,\ldots, k_i$.

For all $i \in  I^c(\sigma)$ with $d \ge d_i$, since $d=0 \ge d_i$, $d_i=0$.
Since $P_i(s_i)=P_i(s_{-N}w^{d_i})$, we have $P_i(s_i)=P_i(s_{-N})$.
Moreover, since $u_iv_i$ is the longest prefix of $s_i$ with $P_i(u_iv_i)=P_i(s_{-N}w^0)$, we have $u_iv_i=u_iw_i=s_i$.
On the other hand, for all $i\in I^c(\sigma)$ with $d < d_i$, we know that $q_i^{k_i}$ tracks the longest prefix $u_iv_i \in \overline{\{s_i\}}$ of $s_i$ such that $P_i(u_iv_i)=P_i(s_{-N}w^0)$.
Thus, there exists $(q,q^{k_1}_1,\ldots,q^{k_l}_l,0)\in X^N_{\sigma}$ such that $q=\delta_H(q_0,s_{-N}w^0)$, and for all $i \in  I^c(\sigma)$, if $d \ge d_i$, $q_i^{k_i}=\delta_H(q_0,s_i)$, and if $d < d_i$, $q_i^{k_i}=\delta_H(q_0,u_iv_i)$, where $u_iv_i$ is the longest prefix of $s_i$ with $P_i(u_iv_i)=P_i(s_{-N}w^0)$.
The base case is true.

The induction hypothesis is that for all $s_{-N}w^d$ with  $d<N$, there exists $(q,q_1,\ldots,q_l,d)\in X^N_{\sigma}$ such that $q=\delta_H(q_0,s_{-N}w^d)$ and for all $i \in  I^c(\sigma)$, if $d \ge d_i$, $q_i=\delta_H(q_0,s_i)$, and if $d < d_i$, $q_i=\delta_H(q_0,u_iv_i)$, where $u_iv_i$ is the longest prefix of $s_i$ with $P_i(u_iv_i)=P_i(s_{-N}w^d)$.
We now prove that the same is also true for $s_{-N}w^{d+1}=s_{-N}w^d e$.

Let us first consider all the $i \in  I^c(\sigma)$ with $d<d_i$. 
By  induction hypothesis,  $q_i=\delta_H(q_0,u_iv_i)$, where $u_iv_i$ is the longest prefix of $s_i$ with $P_i(u_iv_i)=P_i(s_{-N}w^d)$.
Since $d<d_i$,  $d+1 \le d_i$.
Hence, $s_{-N}w^{d+1}=s_{-N}w^de$ is a prefix of $s_{-N}w^{d_i}$.
If $e\in \Sigma_{o,i}$, since $u_iv_i$ is the longest prefix of $s_i$ with $P_i(u_iv_i)=P_i(s_{-N}w^{d})$ and  $P_i(s_i)=P_i(s_{-N}w^{d_i})$, we have $u_iv_ie \in \overline{\{s_i\}}$ and $P_i(u_iv_ie)=P_i(s_{-N}w^{d}e)$.
Since $u_iv_ie \in \overline{\{s_i\}}$,  $|u_iv_i|<|s_i|$.
Moreover, since $q_i=\delta_H(q_0,u_iv_i)$ and $s_i$ is the shortest string in $\mathcal{L}(H)$ such that $P_i(s_i) \in \Theta_i^{N_{o,i}}(st)$ and $s_i\sigma \in \mathcal{L}(H_{N_{o,i}}^{aug})$, $\mathrm{D}_5$ is not satisfied in state $(q,q_1,\ldots,q_l,d)$ for supervisor $i$.
Hence, by $e \in \Sigma_{o,i}$, the supervisor $i$ may satisfy $\mathrm{D}_1$ or $\mathrm{D}_3$ in state $(q,q_1,\ldots,q_l,d)$.
On the other hand, if $e \in \Sigma_{uo,i}$, it can be easily verified that  $\mathrm{D}_2$ or $\mathrm{D}_4$ or $\mathrm{D}_5$ may be satisfied in state $(q,q_1,\ldots,q_l,d)$ for supervisor $i$.

Let us further consider all the $i \in I^c(\sigma)$ with $d \ge d_i$.
Since $q_i=\delta_H(q_0,s_i)$ and  $s_i\sigma \in  \mathcal{L}(H_{N_{o,i}}^{aug})$,  we have  $\sigma \in \Gamma_{H,N_{o,i}}^{aug}(q_i)$. 
Since $d_i \in \{N-N_{o,i},\ldots, N\}$ and $d \ge d_i$, we have $N-d \le N_{o,i}$.
Therefore, $\textup{D}_5$  is satisfied in state $(q,q_1,\ldots,q_l,d)$ for all $i \in  I^c(\sigma)$ with $d \ge d_i$.
Therefore, by (\ref{Eq7}),  a first type of transition is defined at $(q,q_1,\ldots,q_l,d)$ as: 
\begin{align*}
&f_{\sigma}^N((q,q_1,\ldots,q_l,d),(e,e_1,\ldots,e_l))=\\
&(\delta_H(q,e),\delta_H(q_1,e_1),\ldots,\delta_H(q_l,e_l),d+1)=\\
&(q',q_1',\ldots,q'_l,d+1),
\end{align*}
where for all $i \in I^c(\sigma)$, if $d<d_i\wedge e \in \Sigma_{o,i}$, $e_i=e$,  if $d<d_i\wedge e \in \Sigma_{uo,i}$, $e_i=\varepsilon$, and if $d \ge d_i$, $e_i=\varepsilon$.

Let $\mathcal{A}=\{m_1,\ldots, m_h\}\subseteq I^c(\sigma)$ be the set of supervisors such that $d<d_{m_z}$ and $e\in \Sigma_{o,m_z}$ for $z=1,\ldots, h$.
For all $i\in \mathcal{A}$, since $u_i v_i e \in \overline{\{s_i\}}$ and  $P_i(u_i w_i e)=P_i(s^d e)$, let us denote $u_i v_i e z_i$  by the longest prefix of $s_i$ with $P_i(u_i v_i e z_i)=P_i(s^d e)$.
We write $z_i=\sigma_i^1 \ldots \sigma_i^{k_i}$ for $\sigma_i^j \in \Sigma_{uo,i}$, $j=1,\ldots,k_i$.
We also write $q_i^j=\delta_H(q_0,u_i v_i e z_i^j)$ for $j=0,1,\ldots, k_i$.
By recursively applying (\ref{Eq8}),  
\begin{align*}
&f_{\sigma}^N((q',q'_1,\ldots,q'_l,d+1),e_{m_1}^1e_{m_1}^2\cdots e_{m_1}^{k_{m_1}}\cdots e_{m_h}^1e_{m_h}^2\cdots e_{m_h}^{k_{m_h}})=\\
&(q',q_1'',\ldots, q_l'',d+1),
\end{align*}
where (i) $e_{m_z}^j=(\varepsilon,\ldots, \varepsilon, \underset{(m_z+1)^{st}}{\sigma_{m_z}^j}, \varepsilon, \ldots,\varepsilon)$ for all $m_z \in \mathcal{A}$ and all $j=1,\ldots,k_{m_z}$, and (ii) $q_{i}''=q_{i}^{k_{i}}$ if $i \in \mathcal{A}$, and $q_i''=q_i'$ if $i \in I^c(\sigma)\setminus \mathcal{A}$.

Then, we have the following results.
1) For all  $i \in I^c(\sigma)$ satisfying $d<d_i  \wedge e \in \Sigma_{o,i}$, we have $q_i''=q_i^{k_i}=\delta_H(q_0,u_i v_i e z_i)$.
Since $d < d_i$, we have $d+1=d_i$ or $d+1<d_i$.
If $d+1=d_i$, since $u_i v_i e z_i$ is the longest prefix of $s_i$ with $P_i(u_i v_i e z_i)=P_i(s_{-N}w^d e)=P_i(s^{d_i})$, we have $u_i v_i e z_i=s_i$.
Otherwise, if $d+1<d_i$, $u_i v_i e z_i$ is the longest prefix of $s_i$ with $P_i(u_i v_i e z_i)=P_i(s_{-N}w^d e)=P_i(s^{d+1})$.
2) For all $i \in I^c(\sigma)$ satisfying $d<d_i  \wedge e \in \Sigma_{uo,i}$,  we know $q_i''=q'_i=\delta_H(q_0,u_i v_i)$.
If $d+1=d_i$, since $u_iv_i$ is the longest prefix of $s_i$ with $P_i(u_i v_i)=P_i(s_{-N}w^d)=P_i(s_{-N}w^d e)=P_i(s^{d_i})$, we have $u_i v_i=s_i$.
Otherwise, if $d+1<d_i$, $u_i v_i$ is the longest prefix of $s_i$ with $P_i(u_i v_i)=P_i(s_{-N}w^d)=P_i(s_{-N}w^d e)=P_i(s_{-N}w^{d+1})$.
3) For all $i \in I^c(\sigma)$ satisfying $d \ge d_i$,  we have $d+1 \ge d_i$ and $q_i''=q'_i=\delta_H(q_0,s_i)$.

Therefore, there exists $(q',q_1'',\ldots, q_l'',d+1) \in X^N_{\sigma}$ such that $q'=\delta_H(q_0,s_{-N}w^d e)=\delta_H(q_0,s_{-N}w^{d+1})$ and for all $i  \in I^c(\sigma)$, if $d +1 \ge d_i$, $q''_i=\delta_H(q_0,s_i)$ and if $d+1< d_i$,  $q''_i=\delta_H(q_0,u_i v_i')$, where $u_i v_i'$ is the longest prefix of $s_i$ with $P_i(u_i v_i')=P_i(s_{-N}w^{d+1})$.
That completes the proof.
\end{proof}

\bibliographystyle{IEEEtran}     
\bibliography{articles-network}

\begin{thebibliography}{10}
\providecommand{\url}[1]{#1}
\csname url@samestyle\endcsname
\providecommand{\newblock}{\relax}
\providecommand{\bibinfo}[2]{#2}
\providecommand{\BIBentrySTDinterwordspacing}{\spaceskip=0pt\relax}
\providecommand{\BIBentryALTinterwordstretchfactor}{4}
\providecommand{\BIBentryALTinterwordspacing}{\spaceskip=\fontdimen2\font plus
\BIBentryALTinterwordstretchfactor\fontdimen3\font minus
  \fontdimen4\font\relax}
\providecommand{\BIBforeignlanguage}[2]{{%
\expandafter\ifx\csname l@#1\endcsname\relax
\typeout{** WARNING: IEEEtran.bst: No hyphenation pattern has been}%
\typeout{** loaded for the language `#1'. Using the pattern for}%
\typeout{** the default language instead.}%
\else
\language=\csname l@#1\endcsname
\fi
#2}}
\providecommand{\BIBdecl}{\relax}
\BIBdecl

\bibitem{lin88is}
F.~Lin and W.~M. Wonham, ``On observability of discrete-event systems,''
  \emph{Information Sciences}, vol.~44, no.~3, pp. 173--198, 1988.

\bibitem{lin88isb}
------, ``Decentralized supervisory control of discrete-event systems,''
  \emph{Information Sciences}, vol.~44, no.~3, pp. 199--224, 1988.

\bibitem{lin90tac}
------, ``Decentralized control and coordination of discrete-event systems with
  partial observation,'' \emph{IEEE Transactions on Automatic Control},
  vol.~35, no.~12, pp. 1330--1337, 1990.

\bibitem{rudie92tac}
K.~Rudie and W.~M. Wonham, ``Think globally, act locally: Decentralized
  supervisory control,'' \emph{IEEE Transactions on Automatic Control},
  vol.~37, no.~11, pp. 1692--1708, 1992.

\bibitem{rudie95tac}
K.~Rudie and J.~C. Willems, ``The computational complexity of decentralized
  discrete-event control problems,'' \emph{IEEE Trans. on Automatic Control},
  vol.~40, no.~7, pp. 1313--1319, 1995.

\bibitem{yoo02deds}
T.-S. Yoo and S.~Lafortune, ``A general architecture for decentralized
  supervisory control of discrete-event systems,'' \emph{Discrete Event Dynamic
  Systems: Theory and Applications}, vol.~12, no.~3, pp. 335--377, 2002.

\bibitem{yoo04tac}
------, ``Decentralized supervisory control with conditional decisions:
  supervisor existence,'' \emph{IEEE Transactions on Automatic Control},
  vol.~49, no.~11, pp. 1886--1904, 2004.

\bibitem{yin16tac3}
X.~Yin and S.~Lafortune, ``Decentralized supervisory control with
  intersection-based architecture,'' \emph{IEEE Transactions on Automatic
  Control}, vol.~61, no.~11, pp. 3644--3650, 2016.

\bibitem{lin93tac}
F.~Lin, ``Roubust and adaptive supervisory control of discrete event systems,''
  \emph{IEEE Transactions on Automatic Control}, vol.~38, no.~12, pp.
  1848--1852, 1993.

\bibitem{takai02auto}
S.~Takai and T.~Ushio, ``A modified normality for decentralized supervisory
  control of discrete event systems,'' \emph{Automatica}, vol.~38, no.~1, pp.
  185--189, 2002.

\bibitem{takai05tac}
------, ``Characterization of co-observable languages and formulas for their
  super/sublanguages,'' \emph{IEEE Trans. Automatic Control}, vol.~50, no.~4,
  pp. 434--447, 2005.

\bibitem{cai15tac}
K.~Cai, R.~Zhang, and W.~M. Wonham, ``Relative observability of discrete-event
  systems and its supremal sublanguage,'' \emph{IEEE Trans. Automatic Control},
  vol.~60, no.~3, pp. 659--670, 2015.

\bibitem{lin2014control}
F.~Lin, ``Control of networked discrete event systems: dealing with
  communication delays and losses,'' \emph{SIAM Journal on Control and
  Optimization}, vol.~52, no.~2, pp. 1276--1298, 2014.

\bibitem{shu14ac}
S.~Shu and F.~Lin, ``Decentralized control of networked discrete event systems
  with communication delays,'' \emph{Automatica}, vol.~50, pp. 2108--2112,
  2014.

\bibitem{shu17tac2}
------, ``Predictive networked control of discrete event systems,'' \emph{IEEE
  Transactions on Automatic Control}, vol.~62, no.~9, pp. 4698--4705, 2017.

\bibitem{shu17tac1}
------, ``Deterministic networked control of discrete event systems with
  nondeterministic communication delays,'' \emph{IEEE Transactions on Automatic
  Control}, vol.~62, no.~1, pp. 190--205, 2017.

\bibitem{shu15tac}
------, ``Supervisor synthesis for networked discrete event systems with
  communication delays,'' \emph{IEEE Transactions on Automatic Control},
  vol.~60, no.~8, pp. 2183--2188, 2015.

\bibitem{hou20tac}
Y.~Hou, W.~Wang, Y.~Zang, F.~Lin, M.~Yu, and C.~Gong, ``Relative network
  observability and its relation with network observability,'' \emph{IEEE
  Transactions on Automatic Control}, vol.~65, no.~8, pp. 3584--3591, 2020.

\bibitem{liu21tac}
Z.~Liu, X.~Yin, S.~Shu, F.~Lin, and S.~Li, ``Online supervisory control of
  networked discrete-event systems with control delays,'' \emph{IEEE
  Transactions on Automatic Control}, no.~99, pp. 1--1, 2021.

\bibitem{wang16tase}
F.~Wang, S.~Shu, and F.~Lin, ``Robust networked control of discrete event
  systems,'' \emph{IEEE Transactions on Automation Science and Engineering},
  vol.~13, no.~4, pp. 1258--1540, 2016.

\bibitem{rashidinejad18}
A.~Rashidinejad, M.~Reniers, and L.~Feng, ``Supervisory control of timed
  discrete-event systems subject to communication delays and non-fifo
  observations,'' in \emph{In 2018 14th International Workshop on Discrete
  Event Systems (WODES)}, 2018, pp. 456--463.

\bibitem{alves20tac}
M.~Alves, L.~Carvalho, and J.~Basilio, ``Supervisory control of networked
  discrete event systems with timing structure,'' \emph{IEEE Transactions on
  Automatic Control}, vol.~66, no.~5, pp. 2206--2218, 2021.

\bibitem{alves17cdc}
------, ``Supervisory control of timed networked discrete event systems,'' in
  \emph{2017 IEEE 56th Annual Conference on Decision and Control (CDC)}, 2017,
  pp. 4859--4865.

\bibitem{zhao17tcns}
B.~Zhao, F.~Lin, C.~Wang, X.~Zhang, M.~Polis, and L.~Y. Wang, ``Supervisory
  control of networked timed discrete event systems and its applications to
  power distribution networks,'' \emph{IEEE Transactions on Control of Network
  Systems}, vol.~4, no.~2, pp. 146--158, 2017.

\bibitem{xu20deds}
P.~Xu, S.~Shu, and F.~Lin, ``Nonblocking and deterministic decentralized
  control for networked discrete event systems under communication delays,''
  \emph{Discrete Event Dynamic Systems: Theory and Applications}, vol.~31,
  no.~2, pp. 1--21, 2021.

\bibitem{xu20tcns}
------, ``Verification of delay co-observability for discrete event systems,''
  \emph{IEEE Transactions on Control of Network Systems}, vol.~7, no.~1, pp.
  176--186, 2020.

\bibitem{lafortune07book}
C.~G. Cassandras and S.~Lafortune, \emph{Introduction to Discrete Event Systems
  -- Second Edition}.\hskip 1em plus 0.5em minus 0.4em\relax New York:
  Springer, 2008.

\bibitem{ramadge1987supervisory}
P.~J. Ramadge and W.~M. Wonham, ``Supervisory control of a class of discrete
  event processes,'' \emph{SIAM journal on control and optimization}, vol.~25,
  no.~1, pp. 206--230, 1987.

\bibitem{lin94deds}
F.~Lin, ``Diagnosability of discrete event systems and its applications,''
  \emph{Discrete Event Dynamic Systems: Theory and Applications}, vol.~4,
  no.~2, p. 197–212, 1994.

\bibitem{sampath95tac}
M.~Sampath, R.~Sengupta, S.~Lafortune, K.~Sinnamohideen, and D.~Teneketzis,
  ``Diagnosability of discrete-event systems,'' \emph{IEEE Transactions on
  Automatic Control}, vol.~40, no.~9, pp. 1555--1575, 1995.

\bibitem{yin19tac}
X.~Yin, J.~Chen, Z.~Li, and S.~Li, ``Robust fault diagnosis of stochastic
  discrete event systems,'' \emph{IEEE Transactions on Automatic Control},
  vol.~64, no.~10, pp. 4237--4244, 2019.

\bibitem{cao21tcns}
L.~Cao, S.~Shu, F.~Lin, Q.~Chen, and C.~Liu, ``Weak diagnosability of discrete
  event systems,'' \emph{IEEE Transactions on Control of Network Systems}, pp.
  1--1, 2021.

\bibitem{debouk00deds}
R.~Debouk, S.~Lafortune, and D.~Teneketzis, ``Coordinated decentralized
  protocols for failure diagnosis of discrete event systems,'' \emph{Discrete
  Event Dynamic Systems:Theory and Applications}, vol.~10, no.~1, pp. 33--86,
  2000.

\bibitem{debouk03deds}
------, ``On the effect of communication delays in failure diagnosis of
  decentralized discrete event systems,'' \emph{Discrete Event Dynamic
  Systems:Theory and Applications}, vol.~13, no.~3, pp. 263--289, 2003.

\bibitem{qiu06tsmc}
W.~Qiu and R.~Kumar, ``Decentralized failure diagnosis of discrete event
  systems,'' \emph{IEEE Transactions on Systems, Man, and Cybernetics Part A:
  Systems and Humans}, vol.~36, no.~2, pp. 384--395, 2006.

\bibitem{qiu08tsmc}
------, ``Distributed diagnosis under bounded delay communication of
  immediately forwarded local observations,'' \emph{IEEE Transactions on
  Systems, Man, and Cybernetics Part A: Systems and Humans}, vol.~38, no.~3,
  pp. 628--642, 2008.

\bibitem{yin19tc}
X.~Yin and Z.~Li, ``Decentralized fault prognosis of discrete-event systems
  using state-estimate-based protocols,'' \emph{IEEE Transactions on
  Cybernetic}, vol.~49, no.~4, pp. 1302--1313, 2019.

\bibitem{yin16ac}
------, ``Decentralized fault prognosis of discrete event systems with
  guaranteed performance bound,'' \emph{Automatica}, vol.~69, pp. 375--379,
  2016.

\bibitem{kumar09tase}
R.~Kumar and S.~Takai, ``Inference-based ambiguity management in decentralized
  decision-making: Decentralized diagnosis of discrete-event systems,''
  \emph{IEEE Transactions on Automation Science and Engineering}, vol.~6,
  no.~3, pp. 479--491, 2009.

\bibitem{moreira11tac}
M.~V. Moreira, T.~C. Jesus, and J.~C. Basilio, ``Polynomial time verification
  of decentralized diagnosability of discrete event systems,'' \emph{IEEE
  Transactions on Automatic Control}, vol.~56, no.~7, pp. 1679--1684, 2011.

\bibitem{carvalho12ac}
L.~K. Carvalho, J.~C. Basilio, and M.~V. Moreira, ``Robust diagnosis of
  discrete event systems against intermittent loss of observations,''
  \emph{Automatica}, vol.~48, no.~9, pp. 2068--2078, 2012.

\bibitem{carvalho13ac}
L.~K. Carvalho, M.~V. Moreira, J.~C. Basilio, and S.~Lafortune, ``Robust
  diagnosis of discrete-event systems against permanent loss of observations,''
  \emph{Automatica}, vol.~49, no.~1, pp. 223--231, 2013.

\bibitem{carvalho17ac}
L.~K. Carvalho, M.~V. Moreira, and J.~C. Basilio, ``Diagnosability of
  intermittent sensor faults in discrete event systems,'' \emph{Automatica},
  vol.~79, pp. 315--325, 2017.

\bibitem{naoki15ijc}
N.~Kanagawa and S.~Takai, ``Diagnosability of discrete event systems subject to
  permanent sensor failures,'' \emph{Internatinal Journal of Control}, vol.~88,
  no.~12, pp. 2598--2610, 2015.

\bibitem{takai21ac}
S.~Takai, ``A general framework for diagnosis of discrete event systems subject
  to sensor failures,'' \emph{Automatica}, vol. 129, no.~3, pp. 109--119, 2021.

\bibitem{wada21deds}
W.~Akihito and S.~Takai, ``Decentralized diagnosis of discrete event systems
  subject to permanent sensor failures,'' \emph{Discrete Event Dynamic Systems:
  Theory and Applications}, pp. 1573--7594, 2021.

\bibitem{marcos22nahs}
M.~V. Alves, R.~J. Barcelos, L.~K. Carvalho, and J.~C. Basilio, ``Robust
  decentralized diagnosability of networked discrete event systems against dos
  and deception attacks,'' \emph{Nonlinear Analysis: Hybrid Systems}, vol.~44,
  pp. 101--162, 2022.

\bibitem{nunes18deds}
C.~E.~V. Nunes, M.~V. Moreira, M.~V.~S. Alves, L.~K. Carvalho, and J.~C.
  Basilio, ``Codiagnosability of networked discrete event systems subject to
  communication delays and intermittent loss of observation,'' \emph{Discrete
  Event Dynamic Systems: Theory and Applications}, vol.~28, no.~2, pp.
  215--246, 2018.

\bibitem{yin15ac}
X.~Yin and S.~Lafortune, ``Codiagnosability and coobservability under dynamic
  observations: Transformation and verification,'' \emph{Automatica}, vol.~61,
  pp. 241--252, 2015.

\end{thebibliography}

\end{document}